\documentclass{article}

 \usepackage[main, final]{neurips_2026}

\usepackage[utf8]{inputenc} 
\usepackage[T1]{fontenc}    
\usepackage{hyperref}       
\usepackage{url}            
\usepackage{booktabs}       
\usepackage{amsfonts}       
\usepackage{nicefrac}       
\usepackage{microtype}      
\usepackage{xcolor}         

\usepackage{microtype}
\usepackage{graphicx}
\usepackage{booktabs}
\usepackage{multirow}
\usepackage[table,xcdraw]{xcolor}
\usepackage{subcaption}
\usepackage{booktabs} 

\usepackage{wrapfig}
\usepackage{etoc} 

\usepackage{hyperref}
\usepackage{rotating} 
\newcommand{\rotcell}[1]{%
  \rotatebox[origin=c]{90}{\parbox{1.5cm}{\centering #1}}%
}

\usepackage{algorithm}
\usepackage{algorithmic}

\usepackage{amsmath}
\usepackage{amssymb}
\usepackage{mathtools}
\usepackage{amsthm}
\usepackage{tcolorbox}

\DeclareMathOperator*{\argmax}{arg\,max}

\newcommand{\second}[1]{\textcolor{blue}{#1}}
\newcommand{\best}[1]{\textbf{\textcolor{red}{#1}}}

\usepackage{titletoc}

\theoremstyle{plain}
\newtheorem{theorem}{Theorem}[section]
\newtheorem{proposition}[theorem]{Proposition}
\newtheorem{lemma}[theorem]{Lemma}

\theoremstyle{definition}

\theoremstyle{remark}

\title{Nonparametric Distribution Matching \\for Self-Supervised Whole-Slide Image Condensation}

\author{
Duong M. Nguyen$^{1,\dagger}$,
Trong Nghia Hoang$^{2}$,
Hang Thi Nguyen$^{5}$,
Thanh Trung Huynh$^{3}$, \\
\textbf{Phi Le Nguyen}$^{4}$,
\textbf{Minh N. Do}$^{1,3}$ \\[4pt]
$^{1}$University of Illinois Urbana-Champaign \quad
$^{2}$Washington State University \quad
$^{3}$VinUniversity \\
$^{4}$Hanoi University of Science and Technology \quad
$^{5}$ Hanoi Medical University \\
[4pt]
\texttt{$^\dagger$ Correspondence to \{nmduongg\}@illinois.edu}}

\begin{document}

\maketitle

\begin{abstract}

Histological whole-slide images (WSIs) are central to computational pathology but pose severe computational challenges due to their extremely high resolution, often spanning several gigabytes per slide. To enable scalable learning, existing methods apply self-supervised data condensation to reduce computational cost, but typically rely on heuristic prototype learning and do not explicitly preserve learning-relevant feature distributions for downstream tasks. In response,
we introduce a principled reformulation of WSI condensation as a distribution-matching problem under a fixed representational lens, and develop NICER, a tractable approximation framework based on a nonparametric prior with slide-adaptive capacity. Experiments on five histopathology datasets, together with clinical evaluation from a board-certified pathologist, show that NICER consistently outperforms prior methods, achieving an average accuracy improvement of 7.44\% while offering improved efficiency–accuracy trade-offs, highlighting the benefits of principled, distribution-aware condensation for scalable histological representation learning. Source codes are available in \href{https://github.com/nmduonggg/NICER}{\textit{https://github.com/nmduonggg/NICER}}.

\end{abstract}

\section{Introduction} \label{sec:intro}
Histological whole-slide images (WSIs) are high-resolution digital scans of tissue slides which are central to Computational Pathology (CPath)~\citep{song2024cvpr,Song2023-cq,nguyen2025bridging,xiang2023exploring}. In particular, WSIs have been used as training data for machine learning (ML) systems that enable classification~\citep{xiang2023exploring,shao2021transmil,dong2025wsi}, segmentation~\citep{graham2023segment,guo2023sac}, and survival prediction~\citep{fan2023surv}. Given their enormous resolution (e.g., $100,000 \times 100,000$ pixels) which often amounts to several gigabytes per slide, a WSI cannot be passed as a single training input to a ML system. As a result, most CPath pipelines adopt multiple instance learning (MIL) paradigm, which partitions each slide into thousands of patches (e.g., $>10,000$), embedding them with a pre-trained encoder, and aggregating them into a slide-level representation~\citep{Tang_2023_ICCV,nguyen2025bridging,xiang2023exploring,dong2025wsi}. However, this still results in prohibitive costs of model training and data storage~\citep{jin2025fedwsidd,sacco20onedge,thorstenson2014digital_pathology,lee2026adaptive_compression_wsi,dong2025wsi,shin2025wise,cong2024dataset,alsaafin2024sequential} since each WSI instance is now represented with a massive number of feature patches.

\begin{figure*}[t]
    \centering
    \begin{minipage}[t]{0.58\textwidth} \vspace{0pt}
        \centering
        \includegraphics[width=\linewidth]{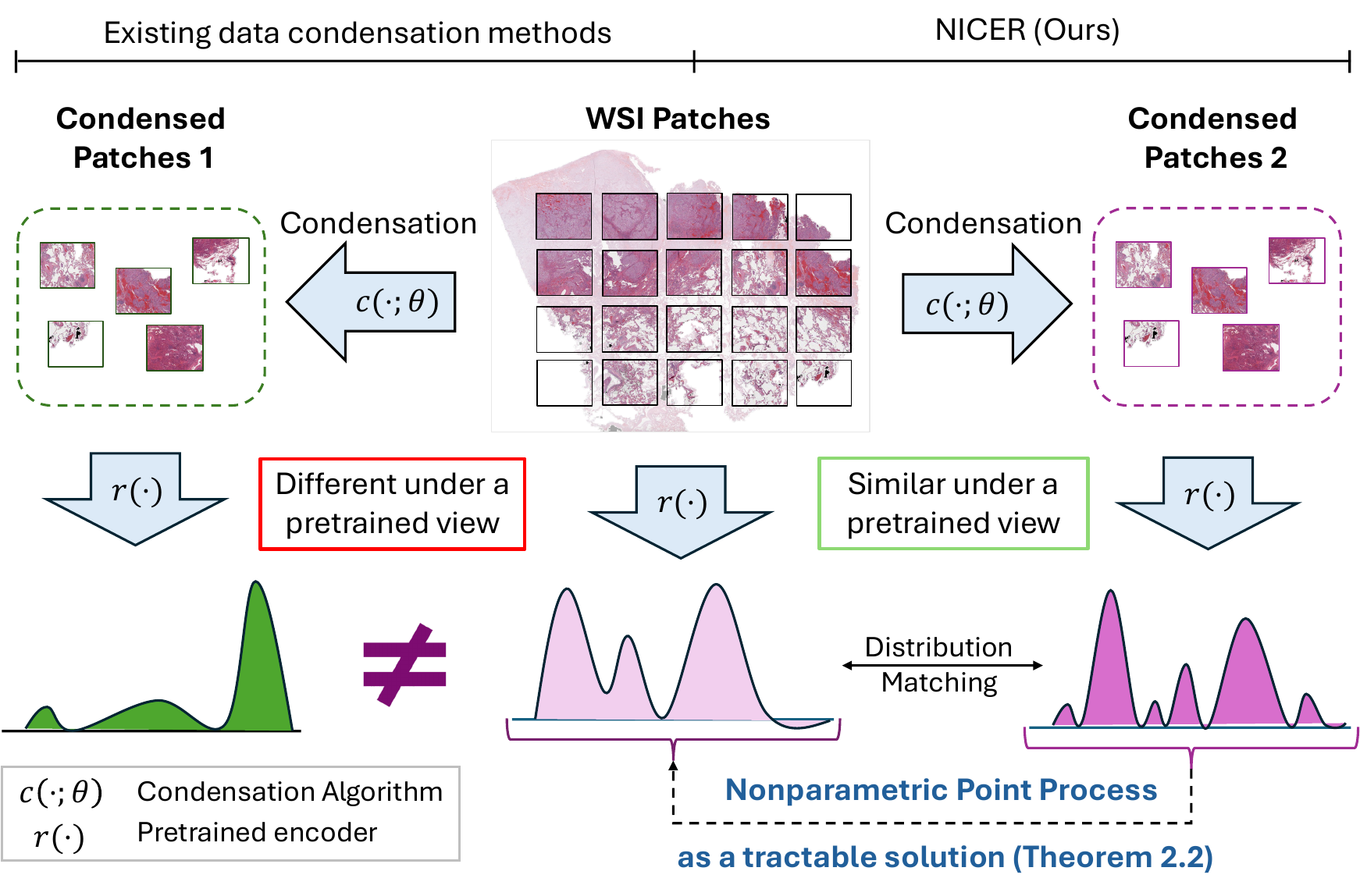}
        \caption{(Left) Existing WSI condensation methods overlook the pretrained representation view. (Right) NICER reformulates WSI condensation as distribution matching under this fixed representation view.}
        \label{fig:motivation2}
    \end{minipage}%
    \hfill%
    \begin{minipage}[t]{0.40\textwidth} \vspace{0pt}
    \centering
    \includegraphics[width=\linewidth]{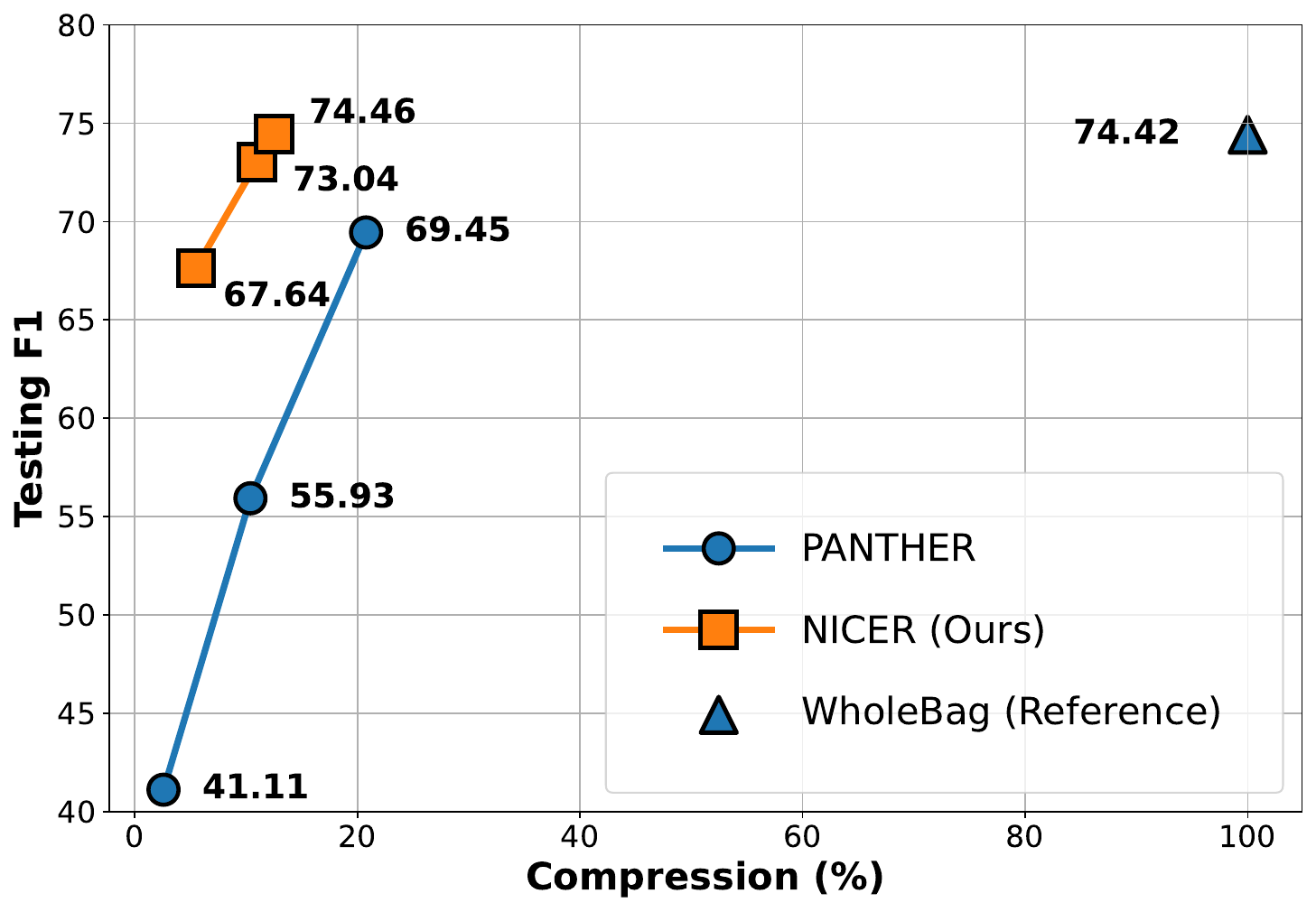}
        \caption{On the same task and model, NICER (ours) achieves better computation-accuracy tradeoff, with higher downstream performance (F1) and smaller compression rate, than PANTHER~\citep{song2024cvpr}, an existing SOTA baseline, by roughly 19\% at a similar compression level. See Sec.~\ref{sec:experiments} for more details.}
        \label{fig:motivation1}
    \end{minipage}%
\vspace{-0.5cm}
\end{figure*}

To keep the training and storage costs manageable, the large set of WSI patches might be further reduced through data condensation~\cite{wang2020dd,song2024cvpr,jin2025fedwsidd,cong2024dataset}. In practice, regardless of the downstream task, most WSI analysis pipelines follow a common paradigm: raw patches are first mapped into a latent feature space by a pretrained encoder, and task-specific heads are then trained on these representations. Therefore, any condensed WSI representation is ultimately interpreted through the latent lens of this fixed encoder, rather than directly in pixel space.

This makes the \textit{encoder-induced feature distribution} critical for WSI condensation. Specifically, even if different condensation strategies preserve comparable information in principle, they may induce substantially different distributions after feature extraction. Condensations aligned with this fixed representation viewpoint are more likely to produce features suitable for downstream learning, whereas others may be information-preserving in the abstract but poorly matched to the geometry used by subsequent models (see Fig.~\ref{fig:motivation2}).

\textbf{Existing Approaches.}  
Despite growing interest in histological imaging, existing data condensation methods largely overlook this encoder-induced information preservation. These methods can be broadly categorized into two paradigms. First, task-specific approaches leverage differential feedback from the solution head to guide condensation or distillation, explicitly optimizing the condensed representation for a particular downstream objective~\citep{deng2022remember,zheng2024leveraginghierarchicalfeaturesharing,jin2025fedwsidd,cong2024dataset}. In contrast, task-agnostic approaches~\citep{song2024cvpr, vu2023h2t,alsaafin2024sequential,claudio2024} aim to condense WSIs independently of any task, typically relying on clustering or unsupervised prototype learning to summarize patch distributions. While the latter paradigm is often more suitable for WSIs due to its reusability, it still overlooks the fact that the distilled data will still be processed by a pre-trained feature extractor with a fixed inductive bias.



\textbf{Fundamental Gap.}
To this end, existing WSI condensation methods lack a principled formulation that explicitly accounts for condensed information preservation under a fixed representation lens. Prior approaches typically rely on heuristic condensation strategies or ad-hoc prototype selection, without formally characterizing how condensed representations should preserve the induced feature distributions. Addressing this gap requires reframing condensation as a \textit{distribution-matching} problem: rather than learning a fixed set of  prototypes from raw patch embeddings, condensation should aim to approximate the observed representation distributions in a controlled manner.

\textbf{Our Contributions.} Motivated by this reformulation, we introduce a theoretically grounded condensation framework that formalizes WSI condensation as a distribution-matching problem and provides practical mechanisms to realize this objective at scale. Our contributions are as follows:

\textbf{(1)}~We reformulate WSI condensation as a distribution-matching problem under a fixed representation lens, moving beyond heuristic prototype learning toward preserving the latent feature distribution induced by a frozen encoder. Theoretically, we connect this objective to a MAP solution under a point-process prior, yielding a principled nonparametric formulation that characterizes the equivalent solution class (see Sec.~\ref{sec:prelim}).

\textbf{(2)}~Our theoretical reformulation motivates NICER, a nonparametric WSI condensation framework that enables tractable optimization of the intractable distribution-matching objective. NICER adopts a learning-based approximation that decouples local pattern coverage from global concept abstraction, providing a principled and efficient mechanism for adaptive-capacity condensation (see Sec.~\ref{sec:method}). 

\textbf{(3)}~Empirically, we evaluate NICER on cancer subtyping and survival prediction across five benchmark datasets, further supported by pathological assessment from a board-certified pathologist. The results show that NICER consistently outperforms existing baselines, achieving better performance-computation tradeoffs while preserving clinically meaningful histological structure (see Sec.~\ref{sec:experiments}).

\section{Theoretical Framework} \label{sec:prelim}
\textbf{Background.}
In computational pathology, a WSI $w$ is represented by a set of $N$ patches
$\{x_i\}_{i=1}^N \subset \mathcal{X}$, where $\mathcal{X}$ is patch space, and is modeled as a random sample $w\sim p(w)$.
Let $r:\mathcal{X}\to\mathbb{R}^d$ denote a fixed pretrained {patch encoder}.
We view $r(\cdot)$ as defining a \emph{fixed representational viewpoint}.

Under this viewpoint, each WSI induces an empirical distribution over patch embeddings, while a condensation mechanism $c(\cdot;\theta)$, parameterized by $\theta$, maps each WSI to a reduced patch set
$c(w;\theta)=\{\tilde x_k\}_{k=1}^K$ with $K\ll N$, inducing the empirical and condensed embedding measures:
\begin{equation}
\label{eq:mu_w_def}
\mu_w \;\triangleq\; \frac1N \sum_{i=1}^N \delta_{r(x_i)}
\;\in\; \mathcal{P}_2(\mathbb{R}^d), \quad \nu_{w,\theta} \;\triangleq\; \frac1K \sum_{k=1}^K \delta_{r(\tilde x_k)}
\;\in\; \mathcal{P}_2(\mathbb{R}^d).
\end{equation}
where $\delta$ is Dirac delta function and $\mathcal{P}_2(\mathbb{R}^d)$ denotes the space of probability measures with finite second moment.
Mixing over slides yields the {global} dataset-level patch feature distributions:
\begin{equation}
\label{eq:global_pr_ptheta}
p_r \;\triangleq\; \mathbb{E}_{w\sim p(w)}[\mu_w],
\qquad
p_\theta \;\triangleq\; \mathbb{E}_{w\sim p(w)}[\nu_{w,\theta}],
\end{equation}

\textbf{Problem Formulation.}~We aim to learn a condensation mechanism $c(\cdot; \theta)$ whose induced representation distribution matches real WSIs under the same pretrained viewpoint:
\begin{equation}
\label{eq:dist_matching}
\min_{\theta}\; D\!\left(p_r,\;p_\theta\right),
\end{equation}
where $D(\cdot,\cdot)$ is a discrepancy between probability measures on $\mathbb{R}^d$.
In this work, we consider squared 2-Wasserstein distance~\cite{courty2016ot,peyre2019computational}, i.e., $D \equiv \mathrm{OT}(\cdot,\cdot)$ and leave other distances to the future work.

\noindent\textbf{Remark.}
Eq.~\eqref{eq:dist_matching} formalizes WSI condensation as \emph{distribution matching under a fixed pretrained lens}.
Different condensed patch sets may preserve comparable semantic content, yet induce different embedding distributions through $r(\cdot)$;
only condensations aligned with this representational viewpoint achieve small distribution discrepancy.

The distribution-matching objective alone does not guarantee condensation in an operational sense. 
We make this objective operational by constraining \(c(\cdot;\theta)\) to select patches whose embeddings lie in a shared global {prototype} set 
\(\Phi=\{\phi_t\}_{t=1}^T \subset \mathbb{R}^d\), which serves as the condensed dataset in the representation space. 
This induces a condensed distribution \(p_{\Phi,\theta}\) and yields a penalized objective,
\begin{equation}
\label{eq:penalized_wass}
\min_{\Phi,\theta}\;
\mathrm{OT}\!\left(p_r,\;p_{\Phi,\theta}\right)
+\lambda|\Phi|,
\qquad \lambda>0.
\end{equation}
\textbf{Solution Derivation.}~Directly solving Eq.~\eqref{eq:penalized_wass} is challenging due to the global OT coupling between mixture distributions across slides, together with variable-cardinality prototype inference over a large candidate domain.
To obtain a scalable solution, we optimize a more tractable surrogate that upper-bounds the global Wasserstein discrepancy by an expectation of per-slide discrepancies. This surrogate decomposes naturally across slides, enabling independent per-slide inference while remaining aligned with the original objective, as shown in Proposition~\ref{prop:global_to_slide} (proof in App.~\ref{app:prop_proof}).

\begin{proposition}[Global-to-Per-Slide Reduction]
\label{prop:global_to_slide}
Let $\mu_w,\nu_{w,\theta}\in\mathcal{P}_2(\mathbb{R}^d)$ denote the real and condensed
within-slide embedding measures defined in
Eq.~\eqref{eq:mu_w_def} and let
$p_r$ and $p_\theta$ be the corresponding global representation distributions
defined in Eq.~\eqref{eq:global_pr_ptheta}.
Then the OT discrepancy satisfies
\begin{equation}
\label{eq:global_le_slide}
\mathrm{OT}\!\left(p_r,\;p_\theta\right)
\;\le\;
\mathbb{E}_{w\sim p(w)}\!\left[
\mathrm{OT}\!\left(\mu_w,\nu_{w,\theta}\right)
\right].
\end{equation}
Moreover, if condensation is restricted to slide-specific finite configurations from the representation space
$\Phi_w\subset\mathbb{R}^d$, yielding condensed measures $\nu_{\Phi_w,\theta}$, then
\begin{equation}
\label{eq:global_penalized_le_slide}
\begin{aligned}
&\mathrm{OT}\!\left(p_r,\;p_{\{\Phi_w\},\theta}\right)
+\lambda\,\mathbb{E}_{w}[|\Phi_w|]
\; \le\;
\mathbb{E}_{w}\!\left[
\mathrm{OT}\!\left(\mu_w,\nu_{\Phi_w,\theta}\right)
+\lambda|\Phi_w|
\right].
\end{aligned}
\end{equation}
\end{proposition}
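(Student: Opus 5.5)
The plan is to prove the first inequality via joint convexity of optimal transport under mixing: glue per-slide optimal couplings into one coupling between the global mixtures. Recall that $\mathrm{OT}(\mu,\nu)=\inf_{\pi\in\Pi(\mu,\nu)}\int\|x-y\|^2\,d\pi(x,y)$ is the squared 2-Wasserstein distance. Since each $\mu_w,\nu_{w,\theta}$ is a finite empirical measure on $\mathbb{R}^d$, an optimal coupling $\pi_w^\star\in\Pi(\mu_w,\nu_{w,\theta})$ exists for every $w$. One can take it to depend measurably on $w$: for finitely supported measures it solves a linear program whose data depend measurably on the patch embeddings, so a measurable selection theorem applies. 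Alternatively, one can avoid selection altogether by taking $\varepsilon$-optimal couplings and letting $\varepsilon\to0$. I expect this measurability point to be the only genuine technical obstacle; the rest is bookkeeping.

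Next, define the mixture coupling $\bar\pi\triangleq\mathbb{E}_{w\sim p(w)}[\pi_w^\star]$. Because marginalization is linear and commutes with the expectation, the first marginal of $\bar\pi$ is $\mathbb{E}_w[\mu_w]=p_r$ and the second is $\mathbb{E}_w[\nu_{w,\theta}]=p_\theta$. Hence $\bar\pi\in\Pi(p_r,p_\theta)$. The mixtures lie in $\mathcal{P}_2$ provided $\mathbb{E}_w\int\|x\|^2d\mu_w<\infty$, which I will assume implicitly; otherwise the right-hand side is infinite and the claim is trivial. By Fubini/Tonelli, using a nonnegative integrand,
\begin{equation*}
\mathrm{OT}(p_r,p_\theta)\le\int\|x-y\|^2\,d\bar\pi
=\mathbb{E}_w\Big[\int\|x-y\|^2\,d\pi_w^\star\Big]=\mathbb{E}_w\big[\mathrm{OT}(\mu_w,\nu_{w,\theta})\big],
\end{equation*}
which is Eq.~\eqref{eq:global_le_slide}.

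For the penalized statement, apply the same argument with $\nu_{w,\theta}$ replaced by the slide-specific condensed measure $\nu_{\Phi_w,\theta}$. Here $p_{\{\Phi_w\},\theta}$ is by definition $\mathbb{E}_w[\nu_{\Phi_w,\theta}]$, and each such measure is supported on the finite set $\Phi_w$, so it lies in $\mathcal{P}_2$. The first inequality then gives $\mathrm{OT}(p_r,p_{\{\Phi_w\},\theta})\le\mathbb{E}_w[\mathrm{OT}(\mu_w,\nu_{\Phi_w,\theta})]$. Adding the identical term $\lambda\,\mathbb{E}_w[|\Phi_w|]$ to both sides, and using linearity of expectation to combine it with the per-slide OT term, yields Eq.~\eqref{eq:global_penalized_le_slide}. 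This step requires that $w\mapsto|\Phi_w|$ be measurable, which I will also assume.
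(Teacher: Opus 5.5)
Your proof is correct and is essentially the paper's argument. The paper invokes joint convexity of $\mathrm{OT}$ plus Jensen's inequality over $w$; your averaged coupling $\bar\pi=\mathbb{E}_w[\pi_w^\star]$ is the standard proof of that convexity, carried out directly for a general mixing distribution (which also justifies the Jensen step when $p(w)$ is not finitely supported), and you then get the penalized bound, as the paper does, by adding $\lambda\,\mathbb{E}_w[|\Phi_w|]$ to both sides.

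One harmless slip: your aside that the right-hand side is infinite whenever the mixtures leave $\mathcal{P}_2$ is false. For example, $\mu_w=\nu_{w,\theta}=\delta_{x_w}$ with $\mathbb{E}\|x_w\|^2=\infty$ gives a right-hand side of $0$. The moment assumption is not needed anyway, since your coupling bound holds verbatim with $\mathrm{OT}$ taking values in $[0,\infty]$.
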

\textbf{Remark.}~Proposition~\ref{prop:global_to_slide} shows that, by the joint convexity of OT, transporting averaged measures is no harder than averaging per-slide transport costs. This motivates the per-slide objectives in Eq.~\eqref{eq:global_penalized_le_slide}, allowing the dataset-level problem over $\Phi$ to be decomposed into independent slide-level problems over $\{\Phi_w\}_{w\sim p(w)}$. While both formulations preserve slide-specific traits, the dataset-level prototypes may merge shared patterns across slides too aggressively. Thus, the slide-specific surrogate is preferable in MIL, where each WSI is processed independently at inference.

Despite this reduction, Eq.~\eqref{eq:global_penalized_le_slide} remains an optimization problem over variable-cardinality configurations and continuous parameters. We sidestep this intractability via Theorem~\ref{thm:ppp_map}, which characterizes solutions for Eq.~\eqref{eq:global_penalized_le_slide} as MAP estimates under a Gibbs posterior induced by a Poisson point process (PPP) prior (proof in App.~\ref{app:thm_proof}). The PPP naturally models finite, unordered sets with unknown cardinality and promotes sparsity through a linear size penalty, making it well suited for prototype configurations. Importantly, the PPP prior is placed on prototypes, which capture distinct prototypical information, rather than on the strongly correlated patches. Hence, we note that our PPP prior consistently aligns with standard assumption about cross-patch correlations in MIL settings~\cite{nguyen2025bridging,abmil}.

\begin{theorem}[PPP-MAP interpretation with OT]
\label{thm:ppp_map}
Let $\Omega\subset\mathbb{R}^d$ be a bounded representation domain and let $\Pi_\rho(\Phi)$ be a homogeneous Poisson point process (PPP) prior over finite prototype configurations $\Phi_w\subset\Omega$ with intensity $\rho \in (0, 1)$.
Let $\pi_0(\theta)$ be any prior over condensation parameters $\theta$.
For $\varepsilon>0$ and $\beta>0$, consider the Gibbs posterior
\begin{equation}
\label{eq:gibbs_posterior_entropic}
\begin{aligned}
&\pi(\Phi_w,\theta \mid \mu_w) \propto
\exp\Big(-\beta\,\mathrm{OT}\left(\mu_w,\nu_{\Phi_w,\theta}\right)\Big)
\Pi_\rho(\Phi_w)
\pi_0(\theta),
\end{aligned}
\end{equation}
where $\mathrm{OT}(\cdot,\cdot)$ denotes the optimal transport
between probability measures with squared-Euclidean cost in representation space.
Then any MAP estimator
\(
(\Phi_w^\star,\theta^\star)\in\arg\max_{\Phi_w,\theta}\pi(\Phi_w,\theta\mid \mu_w)
\)
is equivalently a minimizer of
\begin{equation}
\label{eq:map_objective_entropic}
\min_{\Phi_w,\theta}\;
\mathrm{OT}\!\left(\mu_w,\;\nu_{\Phi_w,\theta}\right)
+\lambda|\Phi_w|
-\tfrac{1}{\beta}\log \pi_0(\theta),
\end{equation}
where $\lambda>0$ is determined by the PPP intensity $\rho$ and $\beta$.
In particular, when $\pi_0(\theta)$ is uniform,
MAP inference under a PPP prior corresponds to minimizing a penalized OT objective.
\end{theorem}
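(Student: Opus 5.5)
The plan is to write out the PPP prior density explicitly, take the negative logarithm of the Gibbs posterior in Eq.~\eqref{eq:gibbs_posterior_entropic}, and read off the penalized objective in Eq.~\eqref{eq:map_objective_entropic}. Since $-\log$ is strictly decreasing and the normalizing constant of $\pi(\cdot\mid\mu_w)$ does not depend on $(\Phi_w,\theta)$, the argmax of the posterior coincides with the argmin of its negative log. The only real content is making the PPP prior density precise.

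First, I fix the reference measure on finite configurations. A homogeneous PPP on the bounded domain $\Omega$ with intensity $\rho$ has a Janossy density with respect to the unit-rate Poisson reference measure on $\bigcup_{n\ge 0}\Omega^n/\mathrm{Sym}$. Equivalently, relative to the measure $\sum_n \frac{1}{n!}\,\mathrm{Leb}^{\otimes n}$, the density of a configuration $\Phi_w=\{\phi_1,\dots,\phi_n\}$ is
\[
\Pi_\rho(\Phi_w)=e^{-\rho|\Omega|}\,\rho^{|\Phi_w|}.
\]
This is finite because $\Omega$ is bounded. It depends on $\Phi_w$ only through its cardinality and is otherwise uniform over locations.

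Second, I substitute this density. Then
\[
-\tfrac{1}{\beta}\log\pi(\Phi_w,\theta\mid\mu_w)=\mathrm{OT}(\mu_w,\nu_{\Phi_w,\theta})+\tfrac{1}{\beta}\log(1/\rho)\,|\Phi_w|-\tfrac{1}{\beta}\log\pi_0(\theta)+C,
\]
where $C$ collects $\rho|\Omega|/\beta$ and the log-normalizer, both independent of $(\Phi_w,\theta)$. Setting $\lambda=\beta^{-1}\log(1/\rho)$, the assumption $\rho\in(0,1)$ gives exactly $\lambda>0$. Multiplying by the positive constant $1/\beta$ and dropping $C$ preserves minimizers, so the MAP set equals the argmin set of Eq.~\eqref{eq:map_objective_entropic}.

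Third, if $\pi_0$ is uniform (for instance on a bounded parameter set, or as an improper flat prior), then $\log\pi_0$ is constant and drops out. This yields the penalized OT objective, matching the per-slide summands of Eq.~\eqref{eq:global_penalized_le_slide}.

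The main obstacle is rigor about what ``MAP'' means on a space of variable dimension. Densities on $\Omega^n$ for different $n$ are only comparable after a common reference measure is chosen, and the stated $\lambda$ depends on that choice. For example, the $1/n!$ convention and the units of Lebesgue measure rescale $\rho$, so $\lambda$ is determined by $\rho$ relative to this convention. I would state the reference measure explicitly and note that the equivalence holds as an identity between objective functions. This avoids any claim that a MAP estimator exists. Existence, if needed, follows because the penalty makes large $|\Phi_w|$ suboptimal (the OT term is bounded on bounded $\Omega$). For each fixed $n$, the OT term is continuous in the prototype locations over the compact set $\bar\Omega^n$, provided $\nu_{\Phi_w,\theta}$ depends continuously on them.
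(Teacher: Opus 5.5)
Your proposal is correct and takes essentially the same route as the paper. The paper computes the Radon--Nikodym derivative $d\Pi_\rho/d\Pi_1(\Phi_w)=e^{-(\rho-1)|\Omega|}\rho^{|\Phi_w|}$ against the unit-rate PPP and reads the posterior as a density with respect to $\Pi_1(d\Phi_w)\,d\theta$; it then takes $-\log$, divides by $\beta$, and sets $\lambda=\log(1/\rho)/\beta$, exactly as you do. One small imprecision: your density $e^{-\rho|\Omega|}\rho^{|\Phi_w|}$ is the density with respect to $\sum_n\frac{1}{n!}\mathrm{Leb}^{\otimes n}$, not with respect to $\Pi_1$, so the two are not literally ``equivalent''; they differ by the constant factor $e^{|\Omega|}$, which is harmless for MAP. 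Your remark that the condition $\rho<1$, and hence the value of $\lambda$, is only meaningful relative to a fixed reference measure is a fair point, which the paper handles only implicitly by fixing $\Pi_1$.
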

\textbf{Intuition.}~Proposition~\ref{prop:global_to_slide} shows that the original objective in Eq.~\ref{eq:penalized_wass} is upper-bounded by the slide-level objective in Eq.~\ref{eq:global_penalized_le_slide}.
Theorem~\ref{thm:ppp_map} further gives this slide-level distribution-matching objective a nonparametric probabilistic interpretation through a PPP prior over $\Phi$. \textit{In particular, it shows that the MAP solutions induced by the PPP prior coincide with minimizers of the proposed cardinality-penalized condensation objective.} This provides a probabilistic perspective on prototypical condensation: rather than treating Eq.~\ref{eq:penalized_wass} solely as a direct OT problem, one can equivalently formulate a MAP inference problem under a nonparametric prior, which motivates a tractable approximation as below.

\textbf{Learning-based Approximation.}
While direct PPP-MAP inference remains computationally intractable on the large empirical support $\mathcal{H}_w=\{r(x_i)\}_{i=1}^N$ of $\mu_w$ given a slide $w$, the flexibility of this probabilistic formulation allows us to introduce latent variables while preserving the distribution-matching structure. We therefore use a proxy pattern set $\mathcal{Z}_w$, with $|\mathcal{Z}_w| \ll |\mathcal H_w|$, as a reduced, data-adaptive support for $\mu_w$, and solve an approximate joint MAP problem over $\mathcal{Z}_w$ and $\Phi_w$:
\begin{equation}
\label{eq:log_joint_prob}
\max_{\mathcal{Z}_w, \Phi_w} \; \log \mathbb{P}(\mathcal{H}_w, \mathcal{Z}_w \mid \Phi_w) = \max_{\mathcal{Z}_w, \Phi_w} \;
\log \mathbb{P}(\mathcal{H}_w \mid \mathcal{Z}_w)
+
\log \mathbb{P}(\mathcal{Z}_w \mid \Phi_w),
\end{equation}
where $\mathbb{P}(\mathcal{H}_w \mid \mathcal{Z}_w)$ encourages $\mathcal{Z}_w$ to summarize the empirical embedding distribution, while $\mathbb{P}(\mathcal{Z}_w \mid \Phi_w)$ approximates the PPP-induced nonparametric mechanism over $\Phi_w$ to support adaptive sparsity (see App.~\ref{app:nicer_nonparametric_prior}).
This formulation relies on the conditional independence assumption $
\mathbb{P}(\mathcal{H}_w \mid \mathcal{Z}_w, \Phi_w)=
\mathbb{P}(\mathcal{H}_w \mid \mathcal{Z}_w),$
yielding a tractable approximation to the per-slide PPP-MAP inference. 




Optimization alternates between inferring $\mathcal{Z}_w$ as a summary of $\mathcal{H}_w$ and solving (parameterized) PPP-MAP formulation over $\Phi_w$ on this support. This two-stage structure allows learned prototypes to revisit the full empirical distribution through $\mathcal{Z}_w$, improving both efficiency and information preservation. We note that the condensation parameters $\theta$ in Eq.~\ref{eq:global_pr_ptheta} are instantiated as slide-specific variables $\mathcal{Z}_w$ and learnable parameters in our MAP formulation (detailed in Sec.~\ref{sec:condensation}).
For simplicity, we omit the slide index $w$ in the following derivations.


\begin{figure*}
    \centering    \includegraphics[width=\linewidth]{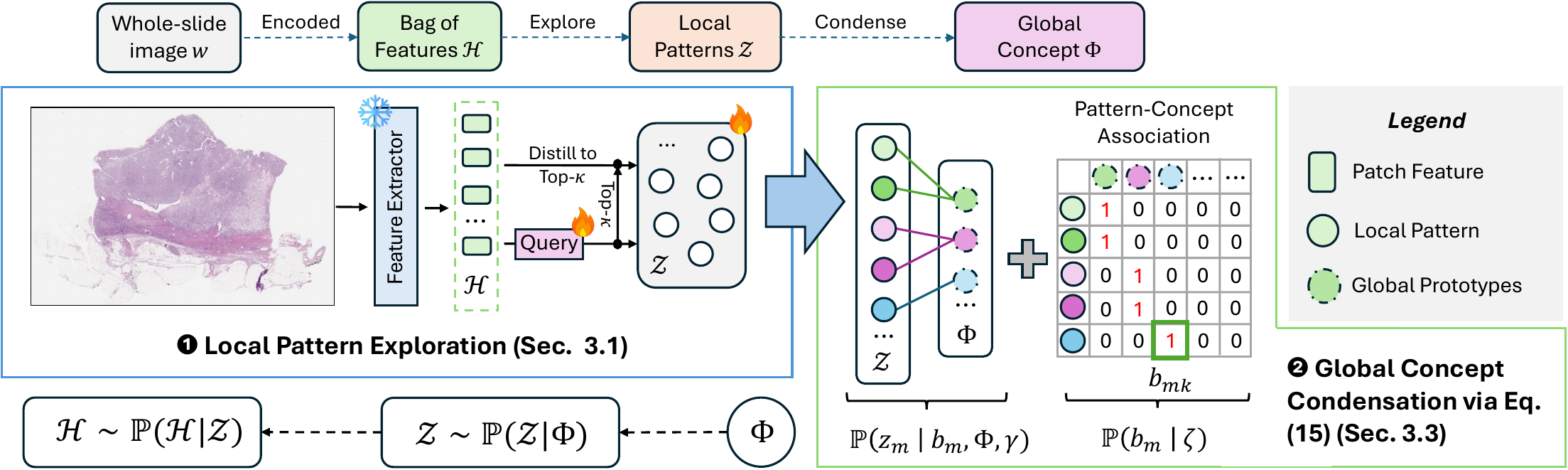}
    \caption{\textbf{Overview of NICER.} NICER provides a learning-based approximation by decoupling patterns $\mathcal{Z}$ and concepts $\Phi$, learned via two stages: \textit{pattern exploration} to capture diverse information, and \textit{condensation} to merge redundancies into compact concepts driven by a nonparametric prior.}
    \vspace{-0.5cm}
    \label{fig:overview}
\end{figure*}

\section{A Tractable Condensation Approximation} \label{sec:method}

This section presents \textbf{NICER}, a \textbf{N}onparametr\textbf{IC} self-sup\textbf{ER}vised WSI condensation framework designed to preserve slide-level representation distributions under a fixed encoder. NICER operationalizes our theoretical formulation through a two-stage procedure: \textit{pattern exploration}, which summarizes empirical patch embeddings $\mathcal{H}$ into a compact set of local patterns $\mathcal{Z}$ by optimizing $\log \mathbb{P}(\mathcal{H}\mid\mathcal{Z})$ (Sec.~\ref{sec:distillation}); and \textit{concept condensation}, which performs nonparametric abstraction of these patterns into a latent concept set $\Phi$ (Fig.~\ref{fig:overview}). 
We operationalize the PPP-induced sparsity interpretation by introducing assignment variables, which allow redundant concepts to collapse and the effective model complexity to adapt to slide-specific variability (Sec.~\ref{sec:condensation}).


\subsection{Local Pattern Exploration} \label{sec:distillation}

\textbf{Intuition.}
We model $\mathcal{Z}$ as a compact summary of the feature bag $\mathcal{H}$ that preserves important structure while reducing redundancy. Each feature $h_i$ is explained by a nearby pattern in $\mathcal{Z}$ via a Gaussian model,
$h_i \sim \mathcal{N}\!\left(\mu_i(\mathcal{Z}), \sigma^2 \mathbf{I}\right)$,
where $\mu_i(\mathcal{Z})$ selects a representative pattern for $h_i$. Learning $\mathcal{Z}$ thus amounts to maximizing $\log \mathbb{P}(\mathcal{H}\mid\mathcal{Z})$, which corresponds to a similarity-based retrieval-style approximation of the feature distribution (see App.~\ref{app:derive_HZ} for more details).

\textbf{Design and Learning.}
We implement this objective as a retrieval-based pattern selection process using cosine similarity $
\Gamma(h_i, z_m) \triangleq \langle h_i, z_m \rangle $.
To allow flexibility while maintaining compactness, each feature distributes its mass over its top-$\kappa$ nearest patterns ($\kappa \ll M$). This yields a stable and efficient approximation of $\mathcal{H}$ by $\mathcal{Z}$, which we treat as an intermediate latent representation.

\subsection{Global Concept Condensation}  \label{sec:condensation}

\textbf{Intuition. }~This section derives the condensation term in Eq.~\ref{eq:log_joint_prob} and formalizes the construction of the concept set $\Phi$ from the intermediate patterns $\mathcal{Z}$. We model this stage as a nonparametric point process, where $\mathcal{Z}$ is generated from a latent concept set $\Phi$ with unknown capacity prior to observation of $\mathcal{Z}$. To jointly learn $\Phi$ and its assignments to patterns in $\mathcal{Z}$, we introduce a \textit{pattern prior} that governs pattern-concept relationships and an \textit{assignment prior} that captures their association structure.

\textbf{Pattern Prior.}
Under the squared-Euclidean ground cost used in our OT formulation, matching a local pattern $z_m$ to a concept $\phi_k$ incurs the quadratic cost $\|z_m-\phi_k\|_2^2$ in representation space~\cite{peyre2019computational}. Motivated by this geometry, we model the pattern-concept relation with a Gaussian likelihood centered at a single concept prototype, whose log-likelihood contains the same squared-distance term. Specifically, each pattern $z_m \in \mathcal{Z}$ is associated with exactly one concept $\phi_k \in \Phi$ and we model
\begin{equation}
\label{eq:N_old}
z_m \sim \mathcal{N}\!\big(\psi_m(\Phi),\, \Sigma_m(\Phi,\gamma)\big),
\end{equation}
where $\psi_m(\Phi)$ selects the responsible concept and $\Sigma_m(\Phi,\gamma)$ specifies a (learnable) covariance.

\textbf{Assignment Variable.}~To enforce a single-concept assignment, we introduce one-hot variables $b_m=(b_{m1},\ldots,b_{mK})$ with $b_{mk}\in\{0,1\}$ and $\sum_{k=1}^K b_{mk}=1$, and define $
\psi_m \triangleq \sum_{k=1}^K b_{mk}\,\phi_k.$
For computational efficiency, we restrict $\Sigma(\Phi; \gamma)$ to be diagonal, which corresponds to learning a per-dimension adaptive metric in representation space and is consistent with the squared-Euclidean cost underlying OT, yielding
\begin{equation}
\label{eq:N_new}
z_m \mid b_m \sim \mathcal{N}\!\Big(\psi_m,\; \mathrm{diag}\big(\vartheta(\psi_m;\gamma)\big)\Big),
\end{equation}
where $\vartheta(\cdot;\gamma)$ is a neural function predicting per-dimension variances. Here, $b_{mk}=1$ indicates that pattern $z_m$ is explained by concept $\phi_k$.

\textbf{Assignment Prior.}
Given the pattern prior in Eqs.~\eqref{eq:N_old},\eqref{eq:N_new} and the assignment $b=\{b_m\}_{m=1}^M$, the condensation term in Eq.~\eqref{eq:log_joint_prob} can be expressed over the joint distribution $\mathbb{P}(\mathcal{Z}, b \mid \Phi)$ as
\begin{equation} \label{eq:assignment_prior}
\begin{aligned}
    &\log \mathbb{P}(\mathcal{Z}, b \mid \Phi) = \log \left \{\prod_{m=1}^M \mathbb{P}(z_m, b_m \mid \Phi, \gamma, \zeta)\right\} =\sum_{m=1}^{M} \log \mathbb{P}(z_m \mid b_m, \Phi, \gamma) + \log \mathbb{P}(b_m \mid \zeta),
\end{aligned}
\end{equation}
where $\mathbb{P}(b_m \mid \zeta)$ defines a prior over pattern-concept assignments with parameters $\zeta$.
To enforce each pattern $z_m \in \mathcal{Z}$ associates with exactly one concept, we impose a categorical prior over $b_m$,
\begin{equation}
\label{eq:cate_b}
\mathbb{P}(b_m \mid \zeta)
\triangleq
\prod_{k=1}^K
\left(
\frac{\exp(\alpha(\phi_k;\zeta))}{\sum_{k'} \exp(\alpha(\phi_{k'};\zeta))}
\right)^{b_{mk}},
\end{equation}
where $\alpha(\cdot)$ is a neural function parameterized by $\zeta$.
This assignment prior enforces a single-concept explanation for each pattern while allowing many-to-one aggregation across patterns. The learnable logits $\alpha(\phi_k;\zeta)$ enable adaptive pattern-concept allocation that reflects the diversity of $\mathcal{Z}$.

\subsection{Condensation Learning} \label{3.2 sec: proto condensation}

Given the nonparametric pattern-concept model above, our condensation term in Eq.~\ref{eq:log_joint_prob} now reduces to an optimization problem with all newly defined parameters,
\begin{equation}
\label{eq:psi_final}
\begin{aligned}
     &\max_{\Phi, \gamma, \zeta, b}  \sum_{m=1}^M \log \mathbb{P}( z_m, b_m \mid \gamma, \zeta, \Phi )= \max_{\Phi, \gamma, \zeta, b} \sum_{m=1}^M \bigg\{\log \mathbb{P}( z_m \mid b_m , \gamma, \Phi ) + \log \mathbb{P}( b_m \mid \zeta ) \bigg\},
\end{aligned}
\end{equation} 
which is directly computable when using Eq.~\ref{eq:N_new} and Eq.~\ref{eq:cate_b} (see App.~\ref{app:lemma_proof} for more details). Solving Eq.~\ref{eq:psi_final} is however not trivial due to its mixed set of discrete/continuous variables. To sidestep this intractability, we instead solve
Eq.~\ref{eq:psi_final} via alternating between (1) optimizing $(\gamma, \zeta, \Phi)$ while fixing $b$; and (2) optimizing $b$ given $(\gamma, \zeta, \Phi)$.

\textbf{Updating $(\Phi, \gamma, \zeta)$ given $b$.}~With assignments fixed, Eq.~\eqref{eq:psi_final} reduces to a continuous optimization problem derived directly from the Gaussian form in Eq.~\eqref{eq:N_new} and the categorical prior in Eq.~\eqref{eq:cate_b}. This subproblem is differentiable and can be optimized using gradient-based methods.

\textbf{Update $b$ given $(\gamma, \zeta, \Phi)$.}~Optimizing the discrete assignment variables is more challenging due to the one-hot constraint on each $b_m$. However, this structure enables a key simplification: since each $b_m$ selects exactly one concept, the log-joint objective becomes linear in $b_m$ when $(\Phi, \gamma, \zeta)$ are fixed, as shown by Eqs.~\eqref{eq:N_new},\eqref{eq:cate_b} and Lemmas~\ref{lemma:R1},\ref{lemma:R2}. In particular, Eq.~\ref{eq:psi_final} becomes:
\begin{equation} \label{eq:final_linear_form}
\begin{aligned}
    &b^\ast = \argmax_b \bigg\{ \sum_{m=1}^M \log \mathbb{P} \big( z_m, b_m \mid \gamma, \zeta, \Phi \big) \bigg\}  = \argmax_b \bigg\{ R_1 (b) + R_2(b) \bigg\}, \\
    \text{where} \quad & R_1(b) = \sum_{m=1}^M \sum_{k=1}^K b_{mk} \cdot \log \mathcal{N} \left( z_m \mid \phi_k; \text{diag}\left( \vartheta \left( \phi_k; \gamma \right) \right) \right) \\
    & R_2(b) = \sum_{m=1}^M \sum_{k=1}^K b_{mk} \cdot \log \bigg( \frac{\exp\big( \alpha(\phi_k; \zeta) \big)}{\sum_{k'} \exp \big( \alpha(\phi_{k'}; \zeta) \big)} \bigg), 
\end{aligned}
\end{equation}
which is a weighted linear optimization task. Here, $R_1(b)$ is derived  from our pattern prior while $R_2(b)$ originates from the categorical assignment prior. Since the resulting objective decomposes across patterns, we iteratively optimize each $b_m$ while holding the remaining assignments fixed. This reduces Eq.~\ref{eq:final_linear_form} to a maxima search that can be solved with linear complexity $\mathcal{O}(M)$ per iteration. 
When solving for $b$ in Eq.~\ref{eq:final_linear_form}, concepts with no assignment are removed as redundant. We note that this active-set update operationalizes the PPP-induced preference for smaller configurations, rather than a separate heuristic pruning step (see App.~\ref{app:nicer_nonparametric_prior}). Pseudocode for NICER can be found in App.~\ref{app:pseudocode}.

\section{Experiments} \label{sec:experiments}
This section empirically evaluates NICER on five datasets across two key pathology tasks: cancer subtyping and survival prediction.  Dataset details and baselines are provided in Sec.~\ref{sec:exp_settings}, while experimental results are in Sec.~\ref{sec:main_results}.

\begin{table*}[]
\centering
\caption{Performance of baselines on Condensation tasks. \best{The best} and \second{second-best} results are highlighted in \best{bold red}, and \second{blue}. Here, \textit{Whole Bag} serves as a full-feature reference, not a baseline.} \label{tab:main-results-dd}
\resizebox{\textwidth}{!}{%
\begin{tabular}{@{}l|c|cccccc|ccc@{}}
\toprule
\multicolumn{1}{c|}{\multirow{3}{*}{Method}} &
  \multirow{3}{*}{Decoder} &
  \multicolumn{6}{c|}{Cancer Subtyping} &
  \multicolumn{3}{c}{Survival Prediction} \\ \cmidrule(l){3-11} 
\multicolumn{1}{c|}{} &
   &
  \multicolumn{3}{c|}{PANDA} &
  \multicolumn{3}{c|}{NSCLC} &
  LUAD &
  BRCA &
  CPTAC \\ \cmidrule(l){3-11} 
\multicolumn{1}{c|}{} &
   &
  Kappa &
  Accuracy &
  \multicolumn{1}{c|}{F1} &
  Kappa &
  Bal. Acc. &
  F1 &
  C-Index &
  C-Index &
  C-Index \\ \midrule
Whole Bag &
  \multirow{8}{*}{ABMIL} &
  91.93 ± 0.48 &
  76.21 ± 1.53 &
  \multicolumn{1}{c|}{76.37 ± 1.38} &
  90.31 ± 1.65 &
  94.52 ± 1.19 &
  95.19 ± 0.78 &
  62.12 ± 1.27 &
  78.52 ± 3.82 &
  48.26 ± 4.87 \\
DeepSets &
   &
  57.26 ± 38.13 &
  51.60 ± 17.98 &
  \multicolumn{1}{c|}{46.42 ± 23.91} &
  79.51 ± 1.80 &
  89.82 ± 0.88 &
  89.73 ± 0.91 &
  \second{59.89 ± 5.34} &
  49.23 ± 3.59 &
  49.80 ± 7.96 \\
ProtoCount &
   &
  0.83 ± 9.55 &
  24.24 ± 1.21 &
  \multicolumn{1}{c|}{11.77 ± 1.28} &
  10.69 ± 3.54 &
  55.34 ± 1.78 &
  53.86 ± 3.15 &
  51.91 ± 5.75 &
  56.47 ± 12.03 &
  51.72 ± 8.97 \\
H2T &
   &
  \second{75.03 ± 1.08} &
  \second{53.91 ± 1.10 }&
  \multicolumn{1}{c|}{\second{50.66 ± 1.01}} &
  79.45 ± 1.80 &
  89.67 ± 0.88 &
  89.72 ± 0.90 &
  51.83 ± 2.59 &
  45.86 ± 4.29 &
  45.34 ± 0.67 \\
OT &
   &
  41.92 ± 11.16 &
  35.50 ± 3.22 &
  \multicolumn{1}{c|}{29.98 ± 4.49} &
  80.79 ± 4.69 &
  90.39 ± 2.36 &
  90.37 ± 2.37 &
  54.07 ± 3.71 &
  66.61 ± 6.30 &
  \second{57.28 ± 2.27} \\
InfiniteGPFA &
   &
  0.00 ± 0.00 &
  26.85 ± 0.00 &
  \multicolumn{1}{c|}{11.36 ± 0.00} &
  6.61 ± 2.49 &
  53.29 ± 1.25 &
  50.61 ± 3.93 &
  50.48 ± 4.35 &
  46.65 ± 10.52 &
  52.18 ± 7.90 \\
PANTHER &
   &
  65.52 ± 12.62 &
  43.77 ± 0.76 &
  \multicolumn{1}{c|}{42.52 ± 1.32} &
  \second{83.98 ± 2.39} &
  \second{92.02 ± 1.18} &
  \second{92.01 ± 1.21} &
  58.92 ± 2.21 &
  \second{72.07 ± 6.17} &
  51.26 ± 2.61 \\ \cmidrule(r){1-1} \cmidrule(l){3-11} 
\textbf{NICER (Ours)} &
   &
  \best{90.98 ± 0.97} &
  \best{73.52 ± 1.27} &
  \multicolumn{1}{c|}{\best{73.22 ± 1.55}} &
  \best{89.73 ± 1.81} &
  \best{94.84 ± 0.92} &
  \best{94.87 ± 0.91} &
  \best{64.67 ± 2.11} &
  \best{75.69 ± 2.02} &
  \best{62.36 ± 2.06} \\ \midrule \midrule
Whole Bag &
  \multirow{8}{*}{DSMIL} &
  90.93 ± 0.43 &
  74.05 ± 1.38 &
  \multicolumn{1}{c|}{74.11 ± 1.31} &
  84.60 ± 4.16 &
  92.27 ± 2.08 &
  92.30 ± 2.08 &
  63.32 ± 4.02 &
  67.96 ± 7.12 &
  47.72 ± 4.99 \\
DeepSets &
   &
  \second{85.53 ± 3.36 }&
  \second{62.26 ± 1.77 }&
  \multicolumn{1}{c|}{\second{62.88 ± 1.73}} &
  25.30 ± 36.76 &
  62.68 ± 18.46 &
  59.10 ± 20.77 &
  56.86 ± 6.41 &
  48.47 ± 15.91 &
  36.86 ± 2.13 \\
ProtoCount &
   &
  14.78 ± 8.52 &
  22.52 ± 2.11 &
  \multicolumn{1}{c|}{14.47 ± 0.53} &
  -0.90 ± 13.16 &
  49.52 ± 6.63 &
  44.24 ± 10.37 &
  52.71 ± 4.47 &
  41.13 ± 8.69 &
  47.71 ± 2.98 \\
H2T &
   &
  60.56 ± 1.57 &
  41.83 ± 0.84 &
  \multicolumn{1}{c|}{34.73 ± 5.23} &
  44.90 ± 3.95 &
  72.61 ± 2.03 &
  70.10 ± 2.64 &
  50.00 ± 0.00 &
  49.82 ± 0.26 &
  45.31 ± 2.90 \\
OT &
   &
  55.06 ± 7.77 &
  37.66 ± 3.74 &
  \multicolumn{1}{c|}{36.60 ± 3.94} &
  48.26 ± 8.11 &
  74.20 ± 4.00 &
  73.52 ± 4.49 &
  \second{58.17 ± 4.11} &
  \second{70.84 ± 0.23} &
  \second{49.36 ± 2.41} \\
InfiniteGPFA &
   &
  17.72 ± 9.52 &
  24.76 ± 2.19 &
  \multicolumn{1}{c|}{20.56 ± 2.03} &
  -2.70 ± 12.64 &
  48.62 ± 6.31 &
  46.52 ± 7.29 &
  46.25 ± 9.89 &
  46.16 ± 6.69 &
  45.27 ± 9.69 \\
PANTHER &
   &
  71.74 ± 2.46 &
  46.09 ± 3.02 &
  \multicolumn{1}{c|}{46.32 ± 3.28} &
  \second{75.05 ± 8.12} &
  \second{87.60 ± 4.04} &
  \second{87.46 ± 4.13 }&
  54.15 ± 4.88 &
  {69.96 ± 2.49 }&
  46.66 ± 6.29 \\ \cmidrule(r){1-1} \cmidrule(l){3-11} 
\textbf{NICER (Ours)} &
   &
  \best{87.81 ± 0.69} &
  \best{66.07 ± 1.73} &
  \multicolumn{1}{c|}{\best{65.19 ± 1.65}} &
  \best{84.61 ± 1.56} &
  \best{92.30 ± 0.74} &
  \best{91.34 ± 0.79} &
  \best{62.12 ± 1.41} &
  \best{74.53 ± 1.73 }&
  \best{56.71 ± 1.18} \\ \midrule \midrule
Whole Bag &
  \multirow{8}{*}{ILRA} &
  93.46 ± 0.24 &
  78.90 ± 1.76 &
  \multicolumn{1}{c|}{79.26 ± 2.05} &
  87.17 ± 0.90 &
  93.57 ± 0.41 &
  93.58 ± 0.45 &
  67.22 ± 2.12 &
  81.46 ± 1.46 &
  62.93 ± 1.24 \\
DeepSets &
   &
  \second{75.49 ± 2.64 }&
  \second{58.31 ± 1.12 }&
  \multicolumn{1}{c|}{\second{54.52 ± 2.01}} &
  28.20 ± 39.89 &
  64.10 ± 19.94 &
  53.03 ± 27.79 &
  56.72 ± 5.95 &
  48.47 ± 15.91 &
  46.68 ± 5.65 \\
ProtoCount &
   &
  8.65 ± 2.83 &
  25.88 ± 0.83 &
  \multicolumn{1}{c|}{14.98 ± 2.48} &
  13.26 ± 8.39 &
  56.63 ± 4.19 &
  51.57 ± 10.17 &
  55.02 ± 12.35 &
  58.83 ± 6.48 &
  48.79 ± 8.02 \\
H2T &
   &
  35.20 ± 17.79 &
  35.05 ± 5.32 &
  \multicolumn{1}{c|}{30.69 ± 6.22} &
  77.50 ± 3.31 &
  88.65 ± 1.69 &
  88.69 ± 1.71 &
  51.28 ± 1.81 &
  49.82 ± 0.26 &
  51.03 ± 1.46 \\
OT &
   &
  39.79 ± 7.28 &
  37.73 ± 3.27 &
  \multicolumn{1}{c|}{31.92 ± 1.93} &
  \second{83.98 ± 0.89 }&
  \second{92.02 ± 0.41 }&
  \second{91.98 ± 0.46 }&
  63.16 ± 3.32 &
  62.86 ± 7.46 &
  \second{59.00 ± 2.49} \\
InfiniteGPFA &
   &
  0.62 ± 2.38 &
  27.00 ± 0.56 &
  \multicolumn{1}{c|}{13.28 ± 2.61} &
  4.50 ± 5.49 &
  52.26 ± 2.77 &
  45.92 ± 8.34 &
  51.67 ± 5.47 &
  57.03 ± 8.21 &
  46.34 ± 8.55 \\
PANTHER &
   &
  69.08 ± 3.63 &
  48.62 ± 5.28 &
  \multicolumn{1}{c|}{47.91 ± 5.45} &
  81.41 ± 2.38 &
  90.72 ± 1.14 &
  90.69 ± 1.21 &
  \second{63.00 ± 9.98 }&
  \second{69.18 ± 1.35 }&
  51.69 ± 10.68 \\ \cmidrule(r){1-1} \cmidrule(l){3-11} 
\textbf{NICER (Ours)} &
   &
  \best{88.92 ± 2.37} &
  \best{69.73 ± 3.02} &
  \multicolumn{1}{c|}{\best{69.56 ± 3.59}} &
  \best{88.47 ± 1.56} &
  \best{94.28 ± 0.76} &
  \best{94.23 ± 0.79 }&
  \best{68.06 ± 4.69 }&
  \best{76.99 ± 3.95 }&
  \best{62.14 ± 2.96 }\\ \bottomrule
\end{tabular}%
}
\end{table*}

\begin{figure*}[t]
    \centering
    \begin{minipage}[t]{0.56\textwidth} 
    \centering
    \captionof{table}{Performance of baselines on Morphological Prototyping tasks using a nonlinear transformer layer. The \best{best} and \second{second-best} results are highlighted in \best{bold red}, and \second{blue}, respectively. }
    \label{tab:morphology_prototype}
    \setlength\tabcolsep{4pt} 
    \resizebox{0.88\linewidth}{!}{%
        \begin{tabular}{@{}cl|cccc|cc@{}}
        \toprule
        \multicolumn{2}{c|}{\multirow{3}{*}{Method}}                  & \multicolumn{4}{c|}{Cancer Subtyping}              & \multicolumn{2}{c}{Survival Prediction} \\ \cmidrule(l){3-8} 
        \multicolumn{2}{c|}{}                    & \multicolumn{2}{c|}{PANDA}            & \multicolumn{2}{c|}{NSCLC} & LUAD    & BRCA    \\ \cmidrule(l){3-8} 
        \multicolumn{2}{c|}{}                    & Accur. & \multicolumn{1}{c|}{F1}    & Bal. Acc.      & F1        & C-Index & C-Index \\ \midrule
        \multicolumn{1}{c|}{\multirow{3}{*}{\rotatebox{90}{MIL}}}   & ABMIL    & 74.05 & \multicolumn{1}{c|}{74.42} & \second{94.19} & \second{94.23} & 62.20              & 77.35              \\
        \multicolumn{1}{c|}{} & DSMIL            & 72.48    & \multicolumn{1}{c|}{72.52} & \best{95.17}          & \best{95.19}     & \second{68.90}   & 77.72   \\
        \multicolumn{1}{c|}{} & ILRA             & \best{76.96}    & \multicolumn{1}{c|}{\second{77.11}} & 93.21          & 93.26     & 55.26   & \best{83.43}   \\ \midrule
        \multicolumn{1}{c|}{\multirow{7}{*}{\rotatebox{90}{Unsupervised}}} & DeepSets & 41.39 & \multicolumn{1}{c|}{37.12} & 89.51 & 90.38 & 65.18              & 61.88              \\
        \multicolumn{1}{c|}{} & ProtoCount       & 42.06    & \multicolumn{1}{c|}{35.41} & 58.38          & 51.61     & 57.19   & 48.99   \\
        \multicolumn{1}{c|}{} & H2T              & 61.07    & \multicolumn{1}{c|}{59.01} & 90.38          & 90.38     & 56.73   & 50.00   \\
        \multicolumn{1}{c|}{} & OT               & \second{68.48}    & \multicolumn{1}{c|}{67.21} & 91.35          & 91.34     & 64.22   & 65.38   \\
        \multicolumn{1}{c|}{} & InfiniteGPFA     & 26.85    & \multicolumn{1}{c|}{11.36}  & 50.00          & 34.41     & 50.00   & 50.37   \\
        \multicolumn{1}{c|}{} & PANTHER  & 70.02    & \multicolumn{1}{c|}{70.06} & 88.53          & 89.45     & 60.29   & 67.59   \\
        \multicolumn{1}{c|}{} & \textbf{NICER} & \best{76.96}    & \multicolumn{1}{c|}{\best{77.12}} & \best{95.17}          & \best{95.19}     & \best{70.81}   & \second{81.40}   \\ \bottomrule
        \end{tabular}%
    }
    \vspace{2.0cm}
        \begin{minipage}[t]{0.95\linewidth} 
            \centering
            \includegraphics[width=0.95\linewidth]{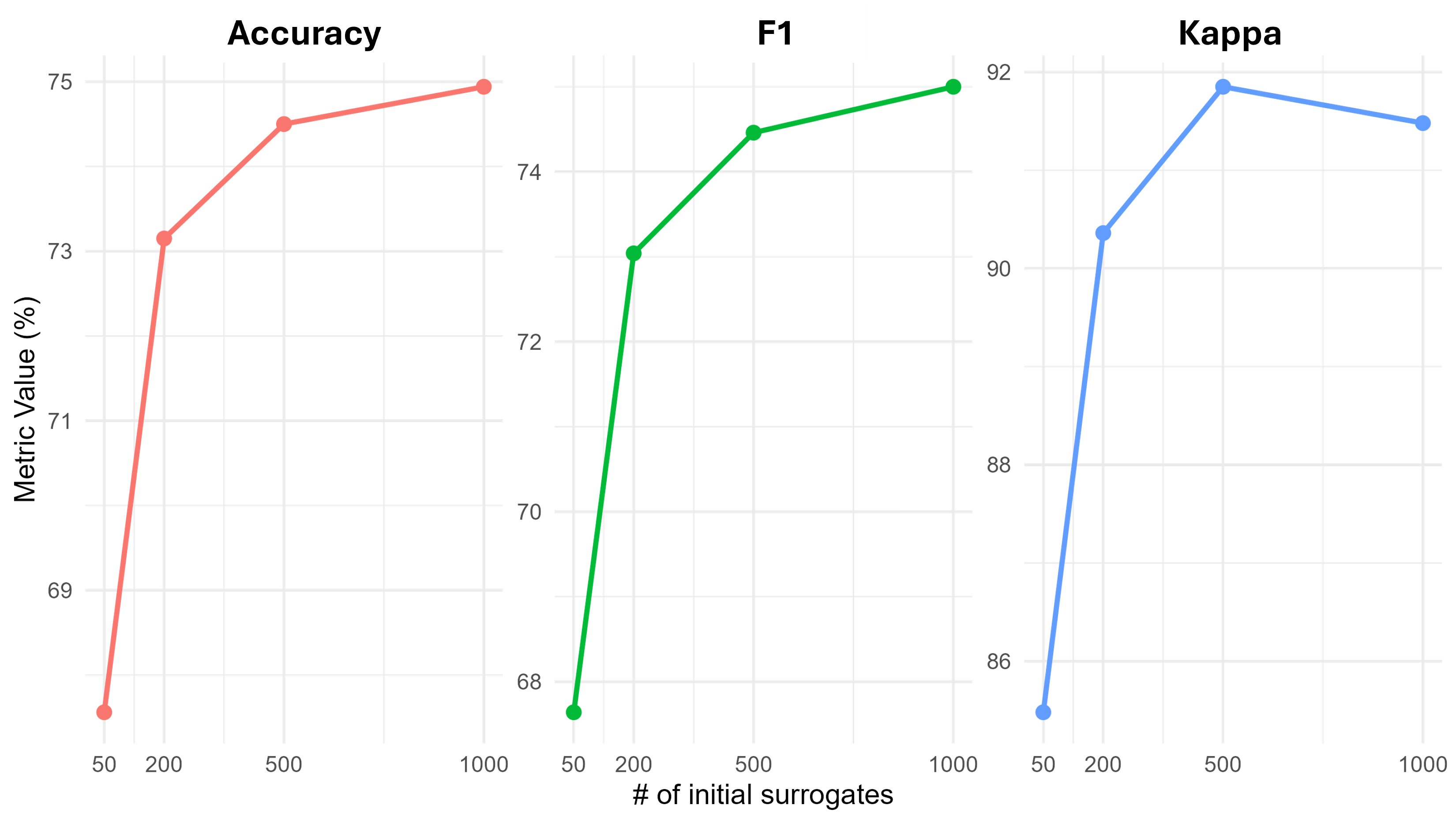}
            \captionof{figure}{Impact of initial patterns number on NICER's performance.}
            \label{fig:initial_P}
        \end{minipage}
\end{minipage}
    \hfill
    \begin{minipage}[t]
    {0.42\textwidth} 
    \vspace{0.0cm}
        \begin{minipage}[t]{\linewidth} 
            \centering
            \captionof{table}{Impact of top-$\kappa$ in pattern distillation on NICER's performance. Conducted on PANDA with ABMIL.}
            \label{tab:abl_top_k_perform}
            \resizebox{0.68\linewidth}{!}{%
            
                \begin{tabular}{@{}l|c|ccc@{}}
                \toprule
                Task                                              & \textbf{$\kappa$} & \textbf{Kappa} & \textbf{Acc.} & \textbf{F1} \\ \midrule
                \multirow{4}{*}{\rotcell{\begin{tabular}[c]{@{}l@{}}{\small Conden}\\ {\small -sation}\end{tabular}}}                    & 1                 & 90.18          & 70.02         & 69.87       \\
                                                                  & 3                 & 91.48          & 74.94         & 75.00       \\
                                                                  & 5                 & 90.47          & 72.93         & 72.72       \\
                                                                  & 10                & 89.52          & 70.47         & 70.49       \\ \midrule 
                \multicolumn{1}{c|}{\multirow{4}{*}{\rotcell{\small Prototyping}}} & 1                 & 92.89          & 75.17         & 75.47       \\
                \multicolumn{1}{c|}{}                             & 3                 & 93.07          & 77.85         & 77.96       \\
                \multicolumn{1}{c|}{}                             & 5                 & 92.77          & 75.62         & 75.80       \\
                \multicolumn{1}{c|}{}                             & 10                & 92.89          & 73.83         & 73.82       \\ \bottomrule
                \end{tabular}%
            } 
            \vspace{0.2cm}
        \end{minipage}
        \begin{minipage}[t]{\linewidth} 
            \centering
            \includegraphics[width=0.92\linewidth]{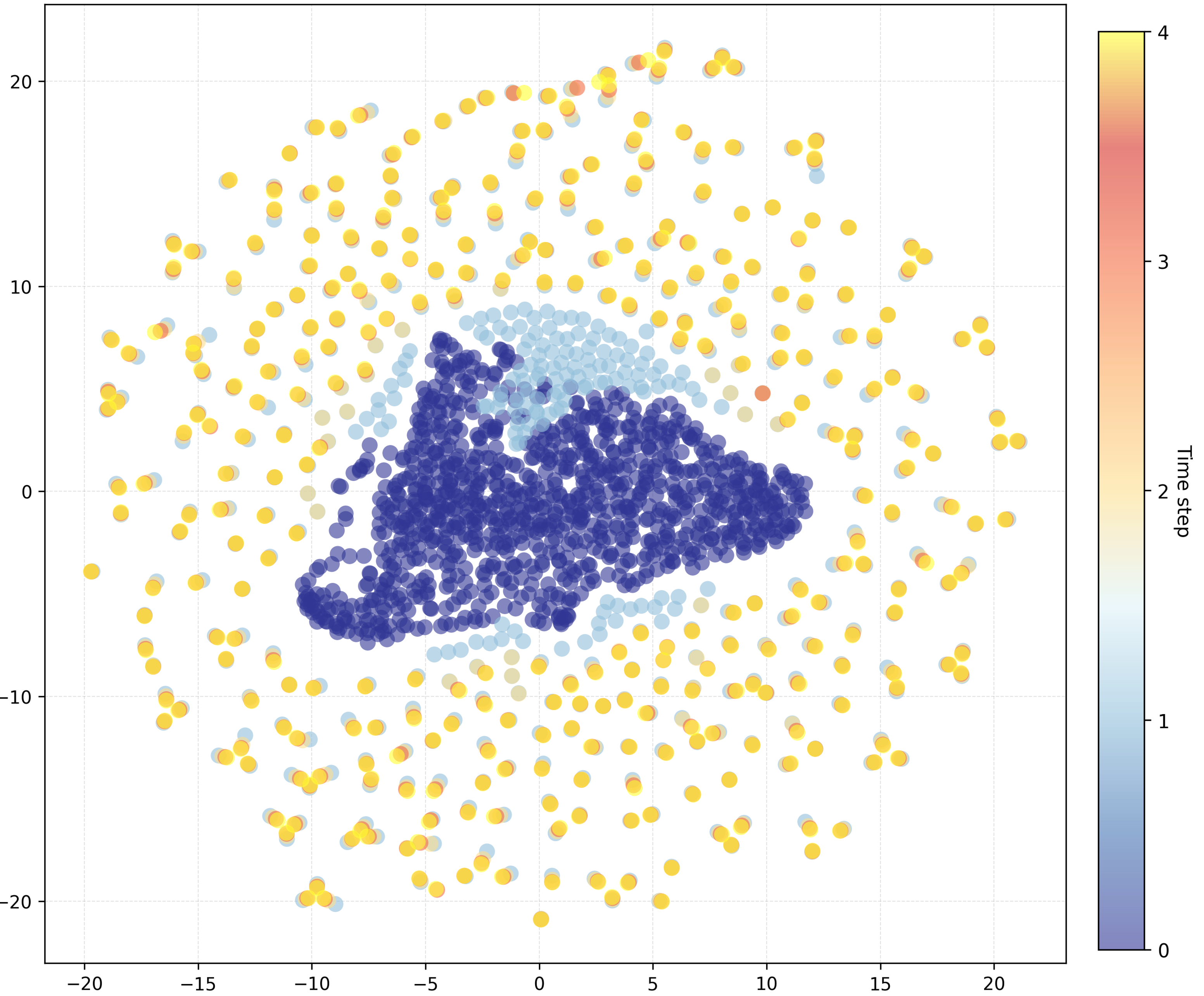}
            \captionof{figure}{NICER concepts disperse over iterations. This shows that our concept prototypes encode different information from the WSI, ensure diversity.}
            \label{fig:prototypes_tsne}
        \end{minipage}
        \end{minipage}
        \vspace{-2.0cm}
\end{figure*}

\subsection{Experiment Settings}    \label{sec:exp_settings}
\textbf{Datasets and Evaluation Metrics.}
For cancer subtyping, we evaluate NICER on two tasks: NSCLC subtyping on TCGA (2 classes), and ISUP grading on PANDA challenge (6 classes)~\citep{Bulten2022,Bulten2020-ju}. In survival prediction, we evaluate NICER with in-cohort TCGA-BRCA and TCGA-LUAD and cross-cohort evaluation with CPTAC. Following~\cite{song2024cvpr}, we evaluate the cancer subtyping tasks using Kappa~\citep{vieira10kappa}, weighted F1, accuracy for PANDA and balanced accuracy for NSCLC. For survival tasks, we report the concordance index (C-Index)~\citep{cindex}.


\textbf{Evaluation Settings.} We evaluate NICER and baselines along two axes: (i) condensation ability, by applying NICER and baselines on the training set and testing against the original WSI feature bag; and (ii) morphological prototyping, by applying methods to both training and test sets.

\textbf{Baselines.} 
We consider two categories of baselines: (1) {unsupervised prototyping} methods, which preserve task-agnostic representations, and (2) {MIL-based predictors}, which learn supervised slide-level representations from the whole bag. For unsupervised prototyping, we use DeepSets~\citep{zaheer17deepsets}, ProtoCounts~\citep{claudio2024}, H2T~\citep{Yu2023-bc}, InfiniteGPFA~\citep{yu2025infGPFA}, Optimal Transport (OT)~\citep{mialon2021a}, and PANTHER~\citep{song2024cvpr}. For MIL-based predictors, we adopt ABMIL~\citep{abmil}, DSMIL~\citep{li2021dsmil}, and ILRA~\citep{xiang2023exploring}. These models are either trained on condensed representations and evaluated on original feature bags to assess condensation quality, or used directly to evaluate NICER’s ability to produce unsupervised slide-level representations. Implementation details are provided in App.~\ref{app:implementation_details}.

\subsection{Performance on Condensation Ability} \label{sec:main_results}

\begin{wraptable}{r}{0.5\textwidth}
\centering
\caption{Computation cost comparison of NICER and baselines on NSCLC subtyping with ILRA}
\resizebox{0.92\linewidth}{!}{%
\begin{tabular}{@{}l|ccc@{}}
\toprule
\textbf{Metric}   & \textbf{Whole Bag} & \textbf{PANTHER} & \textbf{NICER} \\ \midrule
Storage (GB)      & 194                & 3.2              & 5.7            \\
Condense GPU (MB) & -                  & 5.30             & 9.17           \\
Condense Time (s) & -                  & 0.05             & 1.51           \\
Mean F1                & 93.58   & 90.69 & 94.23  \\ \bottomrule
\end{tabular}%
}
\label{tab:computation_cost}
\end{wraptable}

\textbf{Results on cancer subtyping.} 
Tab.~\ref{tab:main-results-dd} shows that NICER consistently outperforms competing methods across predictors for cancer subtyping. On PANDA, NICER achieves gains of 2.28-15.04\% with complex predictors (DSMIL, ILRA), and up to 33.72\% with simpler architectures (e.g., ABMIL). Notably, NICER incurs only a minor drop relative to Whole Bag ($\approx$2.87\% in Kappa), offering a favorable accuracy-efficiency trade-off. Similar trends are observed on NSCLC, where NICER improves balanced accuracy by up to 5.27\%, highlighting the effectiveness of its design.


\textbf{Results on survival prediction (In-cohort).}
In in-cohort survival prediction, NICER consistently outperforms competing methods across predictor architectures, with gains of up to 79.73\%. On BRCA, NICER improves the C-index by up to 33.40\% (4.98\% on average), while on NSCLC it achieves gains of 2.82\% and 4.70\% with ABMIL and DSMIL, respectively. With the ILRA predictor, NICER surpasses the Whole-Bag upper bound on several tasks, suggesting that condensation can denoise or regularize representations for survival prediction.



\textbf{Results on survival prediction (Cross-cohort).}
We evaluate the generalizability of NICER in cross-cohort settings against multiple baselines. As shown in Tab.~\ref{tab:main-results-dd}, NICER consistently outperforms all baselines across diverse predictor architectures by approximately 5\%. Moreover, NICER exhibits a substantially smaller performance gap between in-cohort evaluation on LUAD and cross-cohort evaluation on CPTAC compared to other methods, indicating that it learns transferable representations.

\subsection{Performance on Morphological Prototyping}
As shown in Tab.~\ref{tab:morphology_prototype}, NICER consistently outperforms other unsupervised methods across all scenarios, demonstrating its ability to preserve semantic information from the original noisy feature bag. Due to its nonparametric learning process and decomposition of diversity and compactness, NICER constructs representations that achieve competitive performance, exhibit only minimal drops across diverse tasks and datasets, underscoring its model-agnostic nature. In contrast, the performance of MIL methods is strongly dependent on the chosen architecture.


\begin{figure*}
    \centering
    \includegraphics[width=\linewidth]{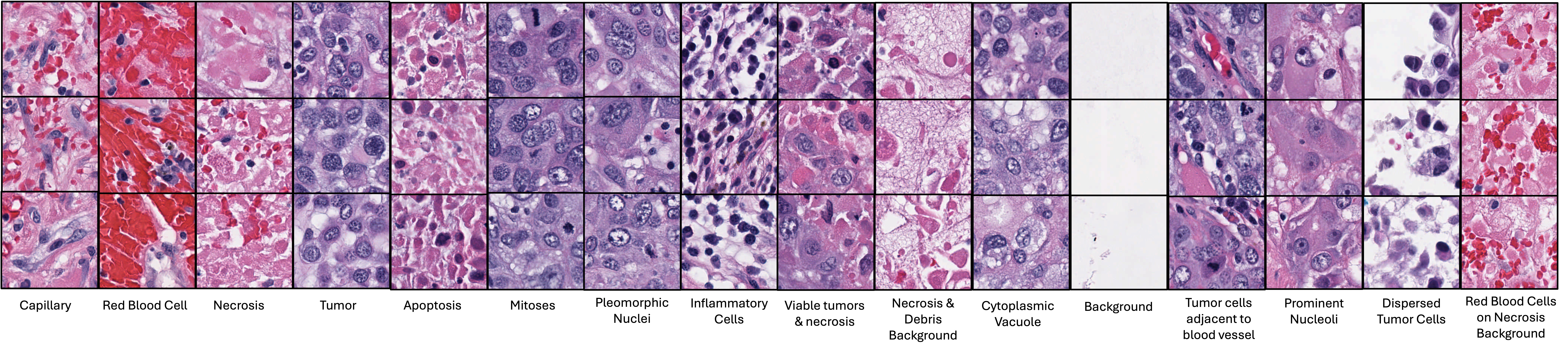}
    \caption{Morphological annotations of prototypes learned from NICER's learned global concepts by a board-certified pathologist on a WSI in the TCGA cohort (BRCA). Further details are in App.~\ref{app:biology}}
    \vspace{-0.5cm}
    \label{fig:biology}
\end{figure*}

\subsection{Additional Analysis \& Ablation Studies}

\textbf{Pathological Interpretation.}~Fig.~\ref{fig:biology} shows prototypical patterns extracted from a representative WSI in the TCGA-BRCA cohort and annotated by a board-certified pathologist. Each pattern exhibits high intra-pattern morphological consistency, indicating that the condensed representations preserve fine-grained cellular and tissue-level features. Moreover, NICER is able to separate closely related but diagnostically distinct patterns (e.g., viable tumor versus tumor-adjacent necrosis), which are critical for pathological assessment. Implementation details and additional analysis are in App.~\ref{app:biology}.

\textbf{Computation cost comparison. }~We compare the computational cost of NICER with the state-of-the-art PANTHER and the Whole Bag setting. As shown in Tab.~\ref{tab:computation_cost}, both PANTHER and NICER reduce storage from 194GB to 3.2–5.7GB. While NICER incurs higher per-iteration GPU and runtime costs during condensation to preserve fine-grained representations, this modest overhead leads to substantially better downstream performance. Because condensation is performed once per WSI, the additional cost is amortized and negligible relative to the gains. More details are discussed in App.~\ref{app:compute_complexity}.


\textbf{Impact of Initial Number of Patterns.}
Fig.~\ref{fig:initial_P} illustrates the effect of the number of patterns initialized at the start of the NICER algorithm on feature construction performance. Experiments are conducted on the PANDA dataset and evaluated on the Condensation Ability task using ABMIL as the predictor. As shown, NICER achieves substantial performance gains as the number of patterns increases, but the improvement begins to saturate beyond a certain point (e.g., $\sim$200). This trend arises because a larger pattern set provides greater distillation capacity, while the plateau reflects the condensation process, which removes redundancy and converges toward stable representations.

\textbf{Impact of Top-$\kappa$ in Pattern Learning.}
We conduct a sensitivity analysis on the number of patterns per patch (top-$\kappa$) for both Condensation Ability and Morphological Prototyping. As shown in Tab.~\ref{tab:abl_top_k_perform}, performance improves with increasing $\kappa$ but plateaus or slightly declines beyond $\kappa=3$, as overly large $\kappa$ diffuses patch information across patterns and weakens distillation, aligning with Fig.~\ref{fig:initial_P}.

\textbf{Concept Prototypes Diversity.}
To analyze the behavior of our condensation process, which maps patterns into concepts, we visualize 2D t-SNE embeddings of the concept set on the NSCLC dataset over the first five iterations. Fig.~\ref{fig:prototypes_tsne} shows that the learned concepts become increasingly dispersed (brighter points) as training progresses, reflecting convergence toward diverse and specific information. This demonstrates the effectiveness of the condensation process in reducing overlap and redundancy while preserving diversity, a key strength of NICER.
More results with different pretrained encoders and more extreme datasets can be found in App.~\ref{app:more_results}

\section{Related Work}
\textbf{Multiple Instance Learning.}
Early histology-based outcome prediction relied on pathologist-annotated regions of interest~\citep{Bychkov2018-nj,kather2019plos,nguyen2025sema}, while more recent work leverages WSIs for clinical prediction using multiple instance learning (MIL)~\citep{Campanella2019-wp,chen22cvpr,Tang_2023_ICCV,nguyen2025bridging, li2021dual,Lu_2023_CVPR,shao2021transmil,Tang_2023_ICCV,xiang2023exploring}. MIL represents each WSI as an unordered set of patches, effectively treating each slide as a dataset and naturally motivating condensation. Thus, NICER is formulated as a MIL-agnostic condensation method prioritizing computational and storage efficiency.


\textbf{Dataset Condensation. }~Dataset condensation synthesizes compact datasets for training neural networks, reducing storage and training costs~\citep{wang2020dd,nguyen2021datameta,nguyen2021dataset}. The most common approach is data-matching methods which optimize synthetic data to mimic the learning behavior of real data. For example, \cite{zhao2021dataset} match gradients between real and synthetic data, while trajectory-matching methods~\citep{cazenavette2022distillation,du2023minimizingaccumulatedtrajectoryerror,DonLe_HighOrder_MICCAI2025} extend gradient matching by aligning multi-step training dynamics. Factorization-based approaches~\citep{deng2022remember,zheng2024leveraginghierarchicalfeaturesharing} distill data into low-dimensional manifolds for instance recovery. However, most existing condensation methods rely on expensive supervision, limiting their applicability to histological imaging.

\textbf{Self-supervised Data Condensation and Computation Pathology Applications.} 
Self-supervised data condensation has recently been proposed to condense unlabeled datasets while preserving task-agnostic information transferable to downstream tasks~\citep{lee2024selfsupervised,yu2025boostselfsuperviseddatasetdistillation}. In computational pathology, condensation is particularly well motivated, as recurring histological patterns often correspond to shared tissue morphology~\citep{hou2016cvpr,Kalra2020-fc,pan2022human,wang2022,yan2012cvpr,Yu2023-bc}. Nevertheless, existing methods typically condense WSIs into prototypes using handcrafted features~\citep{claudio2024,zaheer17deepsets} or unsupervised clustering heuristics~\citep{vu2023h2t,song2024cvpr}. This calls for a new perspective that formulates WSI condensation in a principled manner, enabling theoretically grounded solutions aligned with the task-agnostic representation space. 

\section{Conclusions and Limitations}

\textbf{Conclusion.}~We reformulated whole-slide image condensation as a distribution-matching problem under a fixed representation lens and showed that it admits a principled nonparametric solution via Poisson point process–based MAP inference. This view naturally allows the number of condensed prototypes to adapt to the complexity of each slide and motivates NICER, a tractable learning-based framework for per-slide condensation. Experiments across multiple histopathology benchmarks, supported by pathological evaluation, demonstrate consistent improvements in efficiency-accuracy trade-offs, underscoring the benefits of a theoretically grounded approach to scalable WSI representation learning. 

\textbf{Limitations.}~NICER has several limitations. First, it adopts a bag-of-features formulation and does not explicitly model spatial or multi-scale dependencies among patches, which may restrict its ability to capture fine-grained tissue organization in tasks where spatial context is important. Second, the PPP prior assumes independent concept occurrence, which may be insufficient for representing strongly correlated or highly complex morphological patterns. Finally, NICER operates on fixed pretrained representations. Although we evaluate two foundation models, a broader study across diverse encoders is needed to more fully characterize its robustness and generalizability across settings.
We acknowledge such limitations in details in App.~\ref{app:limitations}.

\begin{ack}
This research was funded by the Vietnam National Foundation for Science and Technology Development (NAFOSTED) under project code 24/2025/HĐ-TN.01; and supported in part by the VinUni-Illinois Smart Health Center. In addition, Trong Nghia Hoang is supported by the National Science Foundation CAREER Award IIS-2544071.
\end{ack}


\medskip

\bibliographystyle{plain}
\bibliography{refs}
\newpage
\appendix

\textbf{\Large{Appendix: Nonparametric Distribution Matching
for Self-Supervised Whole-Slide Image Condensation}}
\startcontents[appendix]
\printcontents[appendix]{}{1}{}

\newpage


\section{Broader Statement of Impact} \label{app:impact}

This research develops an effective nonparametric compression and condensation algorithm for whole-slide images (WSIs), enabling efficient learning from histological data with varying complexity. The mathematical methods and insights presented in this work help bridge the gap between large, gigabyte-scale images and their practical applications in healthcare. While the potential application of our methods to real patient data may raise ethical considerations, such effects are indirect and not the focus of this study. Our experiments rely solely on publicly available datasets, ensuring that no ethical concerns are introduced in the evaluation of our algorithms.

\section{Pseudocode for NICER} \label{app:pseudocode}

\begin{algorithm}[h]
\caption{Nonparametric Self-Supervised Data Condensation (NICER)}
\label{alg:NICER}
\textbf{Input:} WSI feature bag $\mathcal{H}=\{h_i\}_{i=1}^{N}$, iterations $T$, initial patterns $M$, top-$\kappa$ \\
\textbf{Output:} condensed concept prototype set $\Phi$
\begin{algorithmic}[1]
\STATE Initialize pattern set $\mathcal{Z}$ with $M$ patterns
\FOR{$t=1$ \textbf{to} $T$}
    \FOR{$i=1$ \textbf{to} $N$}
        \STATE Query top-$\kappa$ relevant patterns from $\mathcal{Z}$ for $h_i$ using Eq.~\ref{eq:distill_sim}
        \STATE Distill $h_i$ into the selected patterns by maximizing Eq.~\ref{eq:distill_sim}
    \ENDFOR
    \STATE $(\Phi,b) \leftarrow \textsc{condense}(\mathcal{Z})$ \hfill (solve Eq.~\ref{eq:psi_final})
    \STATE Remove unused concepts: $\Phi \leftarrow \{\phi_k \in \Phi \mid \sum_{m=1}^{M} b_{mk} > 0\}$
    \STATE Reinitialize patterns: $\mathcal{Z} \leftarrow \Phi$
\ENDFOR
\STATE \textbf{return} $\Phi$
\end{algorithmic}
\end{algorithm}

\begin{algorithm}[h]
\caption{\textsc{condense}: Alternating Optimization for Prototype Condensation}
\label{alg:condense}
\textbf{Input:} pattern set $\mathcal{Z}=\{z_m\}_{m=1}^{M}$, inner iterations $S$ \\
\textbf{Output:} concept set $\Phi$ and assignments $b$
\begin{algorithmic}[1]
\STATE Initialize concept set $\Phi=\{\phi_k\}_{k=1}^{K}$ from $\mathcal{Z}$ with $K \equiv M$
\STATE Initialize one-hot assignments $b \in \{0,1\}^{M\times K}$ with $\sum_{k=1}^{K} b_{mk}=1$
\STATE Initialize parameters $\gamma$ and $\zeta$
\FOR{$s=1$ \textbf{to} $S$}
    \STATE Update $\gamma$ given $\Phi$ and $b$ \hfill (Eq.~\ref{eq:N_new})
    \STATE Update $\zeta$ given $b$ \hfill (Eq.~\ref{eq:cate_b})
    \STATE Update $\Phi$ given $\gamma$ and $b$ \hfill (Eq.~\ref{eq:N_new})
    \FOR{$m=1$ \textbf{to} $M$}
        \STATE Compute scores $\{s_{mk}\}_{k=1}^{K}$ using Eq.~\ref{eq:final_linear_form}
        \STATE Assign $b_m \leftarrow \mathrm{onehot}\!\left(\arg\max_{k} s_{mk}\right)$
    \ENDFOR
    \STATE Remove unused concepts from $\Phi$ using $b$
\ENDFOR
\STATE \textbf{return} $(\Phi,b)$
\end{algorithmic}
\end{algorithm}

\section{Extended Related Work} \label{app:related_work}

\textbf{Multiple Instance Learning.}
Early histology-based outcome prediction relied on pathologist-annotated regions of interest~\citep{Bychkov2018-nj,kather2019plos,nguyen2025sema}, while more recent work leverages WSIs for clinical prediction using multiple instance learning (MIL)~\citep{Campanella2019-wp,chen22cvpr,Tang_2023_ICCV,nguyen2025bridging}. MIL represents each WSI as a set of patches or data instances, effectively treating each slide as a separate dataset and naturally motivating dataset condensation. Considerable effort has focused on improving MIL through better patch aggregation strategies~\citep{li2021dual,Lu_2023_CVPR,shao2021transmil,Tang_2023_ICCV,xiang2023exploring}, interpretability~\citep{javed2022addmil,kevin2022zooming}, and efficiency~\citep{xiang2023exploring,qiu2023sc}. We formulate NICER as a self-supervised data condensation method with primary priority on computing and storage efficiency, which therefore operates across different MIL frameworks.

\textbf{Dataset Condensation. }~Dataset condensation synthesizes compact datasets for training neural networks, reducing storage and training costs~\citep{wang2020dd,nguyen2021datameta,nguyen2021dataset}. Traditional coreset selection instead chooses representative samples from the original dataset~\cite{guo2022learning,kim2022differentiable,lee19d,mialon2021a}, but its non-editable nature limits performance. In contrast, data-matching methods optimize synthetic data to mimic the learning behavior of real data. For example, \cite{zhao2021dataset} match gradients between real and synthetic data, inspiring a class of bilevel optimization approaches. Trajectory-matching methods~\citep{cazenavette2022distillation,du2023minimizingaccumulatedtrajectoryerror,DonLe_HighOrder_MICCAI2025} extend gradient matching by aligning multi-step training dynamics, while factorization-based approaches~\citep{deng2022remember,zheng2024leveraginghierarchicalfeaturesharing} distill data into low-dimensional manifolds for instance recovery. However, most existing condensation methods rely on supervision and costly full training, limiting their applicability to histological imaging.

\textbf{Self-supervised Data Condensation and Computation Pathology Applications.} Motivated by the rise of pretrained foundation models, self-supervised data condensation (SSDC) is recently proposed to perform data condensation with unlabeled images, aiming to preserve important task-agnostic information which can be transferred to downstream tasks~\cite{lee2024selfsupervised,yu2025boostselfsuperviseddatasetdistillation}
In computational pathology, self-supervised data condensation is a natural and necessary problem since repeating histology patterns often reflect shared morphology~\citep{hou2016cvpr,Kalra2020-fc,pan2022human,wang2022,yan2012cvpr,Yu2023-bc}. 
However, it is impractical to adopt such modern data condensation methods in histological context due to the enormous size of each WSI leading to expensive condensation cost for the whole dataset during distillation. For this reason, most approaches condense WSI into prototypes by using manual features~\citep{claudio2024,zaheer17deepsets} or unsupervised clustering~\citep{vu2023h2t,song2024cvpr}.

\section{Implementation Details} \label{app:implementation_details}
\subsection{Datasets}
We provide brief explanations for the datasets that were used for the evaluation of NICER for condensation ability and prototyping ability.

\textbf{PANDA.}~\citep{Bulten2022,Bulten2020-ju} For the ISUP grading task, we used prostate cancer core needle biopses (n=10,616) from the Prostate Cancer Grade Assessment (PANDA) challenge. Each biopsy is given an ISUP grade, making this a 6-class classification task. These biopses are collected from Karolinska Institute (KRLS) and Radboud University Medical Center (RUMC). We label-stratify the PANDA dataset into train/val/test of 80:10:10 and performance was evaluted using Cohen's quadratic weighted Kappa $\kappa^2$, accuracy and weighted F1 metrics.

\textbf{NSCLC.} For the non-small cell lung carcinoma (NSCLC) subtyping task, we use H\&E WSIs from TCGA for classifying lung adenocarcinoma (LUAD) and lung squamous cell carcinoma (LUSC) cases. The TCGA cohort contains a total of 1,041 slides (LUAD: 529, LUSC: 512). We label-stratify the TCGA cohort into train/val/test fold of 80:10:10 and evaluate performance using Cohen's Kappa, balanced accuracy and F1 metrics.

\textbf{TCGA-BRCA.} The Breast Invasive Carcinoma (BRCA) cohort from The Cancer Genome Atlas (TCGA). a joint effort of the NCI and NHGRI, provides one of the most comprehensive digital pathology resources for breast cancer. The dataset contains 1,133 diagnostic WSIs spanning 1,062 patients, covering diverse histological and molecular subtypes. In addition to imaging, the cohort supplies curated clinical outcomes, including Overall Survival (OS) and Progression-Free Interval (PFI), which have been widely adopted as endpoints for developing and benchmarking survival prediction models. We label-stratify the TCGA-BRCA dataset into train/val/test fold of 80:10:10 and evaluate performance C-Index for survival prediction

\textbf{TCGA-LUAD.} The Lung Adenocarcinoma (LUAD) cohort within TCGA offers a large-scale, multi-institutional collection of pathology images and outcome data for a major subtype of non-small cell lung cancer. It includes 529 diagnostic WSIs corresponding to 478 patients, with extensive clinical annotation. As with BRCA, the LUAD cohort provides OS and PFI as standardized survival endpoints, enabling robust prognostic modeling and cross-study comparison in computational pathology research. We label-stratify the TCGA-LUAD dataset into train/val/test fold of 80:10:10 and evaluate performance C-Index for survival prediction

\textbf{CPTAC.}~We use CPTAC as an external evaluation cohort to assess cross-cohort generalization on the survival prediction task. Models trained on the TCGA cohort are directly evaluated on CPTAC LUAD (185 patients, 486 WSIs) without further adaptation. This setting is substantially more challenging due to pronounced dataset shifts, including variations in image acquisition and H\&E staining protocols, differences in geographic and demographic distributions (reflecting social determinants of health), and potential cohort-specific biases such as discrepancies in clinical follow-up and outcome annotation procedures between TCGA and CPTAC.

\subsection{Evaluation Setting Details}
We view unsupervised prototype construction not only as a condensation framework, whose primary role is to reduce redundancy in the WSI feature bag, but also as a form of representation learning, since the resulting prototypes are later used as inputs for downstream predictors. Accordingly, we evaluate all methods along two complementary axes:

\textbf{Condensation ability.}
Here, the goal is to assess how well the condensed prototype set preserves information from the original WSI feature bag. Specifically, we apply NICER and baseline methods on the training set to obtain prototypes, and then evaluate them on the original uncompressed bag of features. This setup isolates the effectiveness of condensation by measuring how much predictive power is retained (or lost) after summarization, independent of downstream task complexity. It answers the key question: \textit{Does condensation discard critical information or faithfully represent the original slide?}

\textbf{Morphological prototyping.}
In this setting, we follow the prior evaluation protocol~\cite{song2024cvpr} to test whether 
condensed prototypes can generalize as useful, task-agnostic representations. Condensation 
methods are applied to both the training and test sets, and the resulting prototypes are directly 
used for downstream prediction. Unlike the condensation ability evaluation, this setup emphasizes the representation learning capacity of the prototypes, focusing on whether they capture robust morphological cues transferable across unseen WSIs rather than reconstruction fidelity. 
To ensure fairness, we use a nonlinear transformer 
layer for all baselines since NICER produces prototype sets whose capacity adapts to 
WSI complexity and thus cannot be fully exploited by a fixed MLP layer.

We report the standard deviations for Tab.~\ref{tab:main-results-dd} by running with different random seeds.

\subsection{Baseline Details.}
We consider two baseline categories: 1) \textit{unsupervised prototyping} methods, which collapse hierarchy in condensation, followed by a task-specific neural predictor, and 2) \textit{MIL-based predictors}, which construct supervised slide-level representations from the whole bag. For \textit{unsupervised prototyping}, we compare {NICER} with {DeepSets}~\citep{zaheer17deepsets}, {ProtoCounts}~\citep{claudio2024}, {H2T}~\citep{Yu2023-bc}, {InfiniteGPFA}~\citep{yu2025infGPFA}, {Optimal Transport (OT)}~\citep{mialon2021a}, and {PANTHER}~\citep{song2024cvpr}. Specifically, DeepSets, ProtoCounts, and H2T build prototypes from distance-based clustering (e.g., K-Means); InfiniteGPFA is adapted to perform nonparametric factor analysis on each WSI; and OT and PANTHER allow soft prototypical assignment. For \textit{MIL-based predictors}, we adopt three supervised baselines: attention-based MIL (ABMIL)~\citep{abmil}, dual-stream MIL (DSMIL)~\citep{li2021dsmil}, and low-rank MIL (ILRA)~\citep{xiang2023exploring}. These models are trained on unsupervised prototypes and evaluated on original feature bags to measure condensation ability, or used directly to assess NICER’s effectiveness in producing slide-level unsupervised representations.

\subsection{Training Details and Computational Resources}

\textbf{Data preparation. }
WSIs at $20\times$ magnification ($0.5,\mu m$/pixel) are divided into non-overlapping $256 \times 256$ patches, and all patches are used without sampling. These patches are converted to representations using UNI~\cite{chen2024uni}, a pretrained foundation encoder. We set $\kappa=3$ and found $T=20$ iterations sufficient for convergence across all datasets.

\textbf{Hyperparameter settings. } 
For training, we adopt the AdamW optimizer with weight decay set to $1 \times 10^{-5}$ and employ a cosine decay learning rate scheduler. In the \textit{cancer subtyping} experiments, models are trained with cross-entropy loss for up to 50 epochs, with early stopping triggered if the validation loss fails to improve for 10 consecutive epochs. The initial learning rate is set to $1 \times 10^{-4}$. Since both the original feature bags and the NICER's representations form variable-length WSI sets, we use a batch size of 1 combined with gradient accumulation over 32 steps across all methods. For the \textit{survival prediction} task, we optimize using the negative log-likelihood (NLL) loss~\citep{zadeh21nll}, training over 50 epochs with a per-patient batch size of 1 and an initial learning rate of $1 \times 10^{-5}$. The training of NICER further involves an alternating optimization process over 20 iterations, with $\kappa = 3$ and an initial pattern set size of $M_0 = 1000$, which decreases progressively through the condensation procedure. As a nonparametric random process, NICER allows us to directly regulate the number of generated prototypes via $\kappa$ and $M_0$. Following prior practices~\citep{vu2023h2t,song2024cvpr}, we set a number of prototypes generated by unsupervised baselines to 16 for all WSIs used in our experiments. The implementation details of predictor architectures, unsupervised baselines follow original papers and previous settings~\citep{song2024cvpr}.

\textbf{Computational considerations.} All experiments and data preprocessing are conducted on a NVIDIA RTX A6000 with 46GB of memory.

\section{Space Complexity and Scalability} \label{app:compute_complexity}
In this section, we further analyse the space complexity and the practical scalability of NICER for real WSIs. The main computational and memory bottleneck in NICER is Local Pattern Exploration, which computes similarities between $N$ patch embeddings and the initial pattern set $\mathcal Z$ of size $M$. The dominant memory term is the patch-pattern similarity matrix in $\mathbb{R}^{N \times M}$, together with the patch embeddings and initial patterns, giving overall space complexity
\[
\mathcal O(Nd + Md + NM).
\]
For example, on an NVIDIA RTX A6000 with 48 GB VRAM, using $M=1000$, $d=1024$, and float32 precision, the maximum processable number of patches is approximately
\[
\frac{48\times 10^9 - 1000\cdot 1024\cdot 4}{(1024+1000)\cdot 4}
\approx 5.9\times 10^6.
\]
This comfortably exceeds the scale of typical clinical WSIs at 0.5 MPP, where even extreme cases usually contain fewer than $10^6$ tissue patches after background filtering. Larger WSIs can be handled by batching over the $N$ dimension, since patch-pattern similarity computations are independent and batching does not alter the optimization objective.

\section{Nonparametric Prior in NICER}
\label{app:nicer_nonparametric_prior}

In this section, we clarify how the nonparametric prior in NICER is encoded through the active-set optimization procedure. The key point is that NICER does not fix the number of effective concepts in advance. Instead, the effective concept set is determined by the assignments from local patterns to global concepts, and concepts receiving no assignments are removed. This implements a cardinality penalty corresponding to the nonparametric prior.

We begin from the full MAP objective over the local pattern set $\mathcal Z_w$ and the slide-specific concept configuration $\Phi_w$:
\begin{align}
    \max_{\mathcal Z_w, \Phi_w}
    \log \mathbb P(\mathcal H_w, \mathcal Z_w, \Phi_w)
    =
    \max_{\mathcal Z_w, \Phi_w}
    \bigg\{
        \log \mathbb P(\mathcal H_w \mid \mathcal Z_w)
        + \log \mathbb P(\mathcal Z_w \mid \Phi_w)
        + \log \mathbb P(\Phi_w)
    \bigg\}.
    \label{eq:full_map_nicer}
\end{align}
Here, $\mathcal H_w$ denotes the empirical patch-embedding bag of slide $w$, $\mathcal Z_w$ denotes the learned local patterns, and $\Phi_w$ denotes the latent concept configuration. The first term encourages $\mathcal Z_w$ to preserve the empirical patch distribution, the second term assigns local patterns to latent concepts, and the final term is the nonparametric prior over the concept configuration.

For a fixed slide $w$, we drop the slide index for simplicity. Using the likelihood and assignment model defined in Eqs.~\ref{eq:N_old}, \ref{eq:N_new}, \ref{eq:assignment_prior}, and \ref{eq:psi_final}, the condensation component of Eq.~\eqref{eq:full_map_nicer} can be written as
\begin{align}
    \max_{\mathcal Z, \Phi, b, \gamma, \zeta}
    \bigg\{
        \sum_m \log \mathbb P(z_m \mid b_m, \Phi, \gamma)
        +
        \sum_m \log \mathbb P(b_m \mid \zeta)
        +
        \log \mathbb P(\Phi)
    \bigg\}.
    \label{eq:map_with_prior}
\end{align}
Under the Poisson point process prior used in our formulation (see Theorem~\ref{thm:ppp_map}), the log-prior over a finite concept configuration contributes a penalty proportional to its cardinality. Up to constants independent of the optimization variables, we therefore obtain
\begin{align}
    \max_{\mathcal Z, \Phi, b, \gamma, \zeta}
    \bigg\{
        \sum_m \log \mathbb P(z_m \mid b_m, \Phi, \gamma)
        +
        \sum_m \log \mathbb P(b_m \mid \zeta)
        -
        \lambda |\Phi|
    \bigg\},
    \label{eq:map_cardinality_penalty}
\end{align}
where $\lambda > 0$ is determined by the intensity of the point-process prior and the inverse temperature of the corresponding Gibbs posterior. Thus, the nonparametric prior favors compact concept configurations while still allowing the number of concepts to adapt to the complexity of each slide.

Importantly, $|\Phi|$ is not treated as a fixed constant in NICER. Although the algorithm may be initialized with a large candidate concept set, only concepts that are assigned to at least one local pattern remain active. We therefore define the active concept set as
\begin{align}
    \Phi^{+}
    =
    \left\{
        \phi_k \in \Phi :
        \sum_m b_{mk} > 0
    \right\},
    \qquad
    |\Phi^{+}|
    =
    \left|
    \left\{
        k :
        \sum_m b_{mk} > 0
    \right\}
    \right|.
    \label{eq:active_set_definition}
\end{align}
With this active-set parameterization, the effective MAP objective optimized by NICER becomes
\begin{align}
    \max_{\mathcal Z, \Phi, b, \gamma, \zeta}
    \bigg\{
        \sum_m \log \mathbb P(z_m \mid b_m, \Phi, \gamma)
        +
        \sum_m \log \mathbb P(b_m \mid \zeta)
        -
        \lambda |\Phi^{+}|
    \bigg\}
    \label{eq:active_set_objective}
\end{align}
or equivalently,
\begin{align}
    \max_{\mathcal Z, \Phi, b, \gamma, \zeta}
    \bigg\{
        \sum_m \log \mathbb P(z_m \mid b_m, \Phi, \gamma)
        +
        \sum_m \log \mathbb P(b_m \mid \zeta)
        -
        \lambda
        \left|
        \left\{
            k :
            \sum_m b_{mk} > 0
        \right\}
        \right|
    \bigg\}.
    \label{eq:active_assignment_objective}
\end{align}

This form makes the role of the nonparametric prior explicit. The assignment variables $b$ determine which concepts are used, and the cardinality term penalizes the number of active concepts. During the alternating optimization described in App.~\ref{app:pseudocode}, concepts with no assigned patterns are removed from the active set. This pruning step is therefore not a heuristic post-processing operation; it is the active-set update corresponding to the PPP-induced cardinality penalty in the MAP objective. In this sense, NICER performs approximate nonparametric MAP inference over a reduced support: the local pattern set $\mathcal Z$ approximates the empirical patch distribution, while the active concept set $\Phi^{+}$ approximates the variable-cardinality prototype configuration favored by the nonparametric prior.

We provide the connection between theoretical objects in our analysis and NICER approximation components in Tab.~\ref{app:tab:connection_theo_algo}.

\begin{table}[t]
\centering
\caption{Connection between the theoretical objective and NICER.}
\label{app:tab:connection_theo_algo}
\begin{tabular}{lll}
\toprule
Theoretical object & Role & NICER approximation \\
\midrule
$\mu_w$ & Empirical slide distribution & Reduced support $\mathcal Z_w$ \\
$OT(\mu_w,\nu_{\Phi_w})$ & Distribution matching & Gaussian pattern-concept likelihood \\
$\lambda|\Phi_w|$ & Cardinality penalty & Active concept set $|\Phi_w^+|$ \\
PPP prior $\Pi_\rho(\Phi_w)$ & Variable-capacity prior & Pruning unassigned concepts \\
MAP inference & Prototype selection & Alternating updates over $(\Phi,\gamma,\zeta)$ and $b$ \\
\bottomrule
\end{tabular}
\end{table}

\section{Derivation of $\mathbb{P}(\mathcal{H} \mid \mathcal{Z})$}
\label{app:derive_HZ}

We begin from the conditional likelihood as follow:
\begin{equation}
\log \mathbb{P}(\mathcal{H} \mid \mathcal{Z}, \theta)
= \sum_{i=1}^N \log \mathcal{N}(h_i \mid z_{(i)}^\ast, \sigma^2 \mathbf{I}),
\end{equation}
where each feature $h_i$ is modeled as a Gaussian centered at its assigned 
pattern prototype $z_{(i)}^\ast$. Expanding the Gaussian log-likelihood yields:
\begin{equation}
\log \mathbb{P}(\mathcal{H} \mid \mathcal{Z}, \theta)
= -\frac{1}{2\sigma^2} \sum_{i=1}^N \|h_i - z_{(i)}^\ast\|^2 + C,
\end{equation}
with $C$ denoting terms independent of $h_i$ or $z_{(i)}^\ast$. The squared 
distance can be written as
\begin{equation}
\|h_i - z_{(i)}^\ast\|^2 = \|h_i\|^2 + \|z_{(i)}^\ast\|^2 
- 2\langle h_i, z_{(i)}^\ast\rangle.
\end{equation}
Since both $h_i$ and $z_{(i)}^\ast$ are $\ell_2$-normalized embeddings, {we 
have $\|h_i\|^2 = \|z_{(i)}^\ast\|^2 = 1$}. This reduces the 
squared distance to
\begin{equation}
\|h_i - z_{(i)}^\ast\|^2 = 2 - 2\langle h_i, z_{(i)}^\ast\rangle.
\end{equation}
Plugging this into the log-likelihood gives
\begin{equation} \label{eq:distill_sim}
    \log \mathbb{P}(\mathcal{H} \mid \mathcal{Z}, \theta)
\approx -\frac{1}{2\sigma^2} \sum_{i=1}^N \big(2 - 2\langle h_i, z_{(i)}^\ast\rangle\big) + C.
\end{equation}
This shows 
that maximizing the Gaussian likelihood is approximately equivalent to 
maximizing feature - prototype similarity, providing a probabilistic justification 
for our design.

\textbf{Design and Learning.}
We implement this objective as a retrieval-based pattern selection process using cosine similarity $
\Gamma(h_i, z_m) \triangleq \langle h_i, z_m \rangle $.
To allow flexibility while maintaining compactness, each feature distributes its mass over its top-$\kappa$ nearest patterns ($\kappa \ll M$). This yields a stable and efficient approximation of $\mathcal{H}$ by $\mathcal{Z}$, which we treat as an intermediate latent representation. 

\textbf{Discussion on Top-$\kappa$ Approximation}.~As shown in App.~\ref{app:pseudocode}, the Top-$\kappa$ step is positioned where revisitation occurs after a single iteration of nonparametric compression. In this context, the algorithm iterates over local patches once more and distills them into the current shared prototypes. However, distilling into irrelevant prototypes collapses the information and produces isolated local prototypes that are semantically unrelated to any others, which are subsequently deactivated during the next iteration of concept condensation. Such distillation therefore incurs extra computational cost without improving performance, which is why we apply top-$\kappa$ assignment to avoid this. This is confirmed by our experiments in Tab.~\ref{tab:abl_top_k_perform}, where performance improves from $\kappa=1$ to $\kappa=3$ and plateaus beyond $\kappa=5$, indicating that the residual approximation gap becomes negligible once patterns are well-separated.

From a theoretical point of view, we anticipate that this will not lead to any performance degradation based on the conceptual and empirical validation above. However, formalizing this into a rigorous theoretical analysis is an interesting future work, which fortunately does not affect the key contribution in the current work.

\section{Proof of Theorem~\ref{thm:ppp_map}}
\label{app:thm_proof}

\subsection{Problem Setup}

Following the problem formulation derived in Sec.~\ref{sec:prelim}, our objective is to match these empirical measures under a fixed representational viewpoint using optimal transport:
\begin{equation}
\min_\theta\;
\mathrm{OT}\!\left(\mu_w,\;\nu_{\Phi_w, \theta}\right),
\label{eq:globalOT}
\end{equation}
where $\mathrm{OT}$ denotes optimal transport with squared-Euclidean cost.

\textbf{Prototype-based condensation.}
Condensation is required to be both compact and reusable across slides. Following the problem formulation derived in Sec.~\ref{sec:prelim}, the per-slide distribution matching problem becomes
\begin{equation}
\min_{\Phi_w,\theta}\;
\mathrm{OT}\!\left(\mu_w,\;\nu_{\Phi_w,\theta}\right).
\label{eq:globalOTPhi}
\end{equation}
Since the prototype cardinality $|\Phi_w|$ is not determined by distribution matching alone,
we introduce a complexity penalty that controls compression:
\begin{equation}
\min_{\Phi_w,\theta}\;
\mathrm{OT}\!\left(\mu_w,\;\nu_{\Phi_w,\theta}\right)
+\lambda|\Phi_w|,
\qquad \lambda>0.
\label{eq:penalizedW2}
\end{equation}
We establish that this objective is equivalent to MAP inference over $(\Phi_w, \theta)$ under a Poisson point process prior, a correspondence that both justifies the cardinality penalty probabilistically and motivates our alternating MAP-based optimization algorithm.

\subsection{Poisson Point Process prior and MAP equivalence}

\textbf{PPP prior and induced cardinality penalty.}
Let \(\Pi_\rho\) be a homogeneous Poisson point process (PPP) on a bounded domain
\(\Omega \subset \mathbb{R}^d\) with constant intensity \(\rho\in(0,1)\). A realization is a finite
configuration
\begin{equation}
\Phi_w=\{\phi_1,\ldots,\phi_n\}\subset \Omega,
\qquad n=|\Phi_w|.
\end{equation}
By the standard characterization of a homogeneous PPP,
\begin{equation}
|\Phi_w|\sim \mathrm{Poisson}(\rho|\Omega|),
\end{equation}
and conditional on \(|\Phi_w|=n\), the points
\(\phi_1,\ldots,\phi_n\) are i.i.d. uniform on \(\Omega\). Therefore, the \(n\)-point density of
\(\Pi_\rho\) is
\begin{equation}
j_{\rho,n}(\phi_1,\ldots,\phi_n)
=
\exp(-\rho|\Omega|)\rho^n \frac{1}{n!|\Omega|^n}.
\end{equation}

Let \(\Pi_1\) denote the unit-rate PPP on the same domain. Its \(n\)-point density is
\begin{equation}
j_{1,n}(\phi_1,\ldots,\phi_n)
=
\exp(-|\Omega|) \frac{1}{n!|\Omega|^n}.
\end{equation}
For a given configuration \(\Phi_w=\{\phi_1,\ldots,\phi_n\}\), with \(n=|\Phi_w|\), the
Radon--Nikodym derivative of \(\Pi_\rho\) with respect to \(\Pi_1\) is obtained by taking the ratio of
the corresponding densities:
\begin{equation}
\frac{d\Pi_\rho}{d\Pi_1}(\Phi_w)
=
\frac{
j_{\rho,n}(\phi_1,\ldots,\phi_n)
}{
j_{1,n}(\phi_1,\ldots,\phi_n)
}
=
\frac{
\exp(-\rho|\Omega|)\rho^n
}{
\exp(-|\Omega|)
}.
\end{equation}
Thus,
\begin{equation}
\frac{d\Pi_\rho}{d\Pi_1}(\Phi_w)
=
\exp\!\big(-(\rho-1)|\Omega|\big)\rho^{|\Phi_w|}. \label{eq:drho_d1}
\end{equation}

In what follows, we work with this density representation of the PPP prior, implicitly with respect
to the reference measure \(\Pi_1\). Taking the negative logarithm gives
\begin{equation}
-\log \frac{d\Pi_\rho}{d\Pi_1}(\Phi_w)
=
(\rho-1)|\Omega|
-
|\Phi_w|\log\rho.
\end{equation}
Since the first term is independent of \(\Phi_w\), for MAP estimation we may write
\begin{equation}
-\log \frac{d\Pi_\rho}{d\Pi_1}(\Phi_w)
=
\lambda_0|\Phi_w|+\mathrm{const},
\qquad
\lambda_0\triangleq \log(1/\rho)>0.
\end{equation}
Thus, the PPP prior induces a sparsity-promoting penalty proportional to the number of prototypes.

\textbf{Gibbs posterior and MAP objective.}
Using the PPP prior density derived above, we define the Gibbs posterior over
\((\Phi_w,\theta)\) as
\begin{equation}
\pi(\Phi_w,\theta \mid \mu_w)
\propto
\exp\!\Big(
-\beta\,\mathrm{OT}(\mu_w,\nu_{\Phi_w,\theta})
\Big)
\frac{d\Pi_\rho}{d\Pi_1}(\Phi_w)
\pi_0(\theta),
\qquad \beta>0.
\end{equation}
Here the posterior is understood as a density with respect to the reference measure
\(\Pi_1(d\Phi_w)d\theta\). Substituting Eq.~\ref{eq:drho_d1}
\begin{equation}
\frac{d\Pi_\rho}{d\Pi_1}(\Phi_w)
\propto
\rho^{|\Phi_w|}
=
\exp\!\big(-\lambda_0|\Phi_w|\big),
\qquad
\lambda_0=\log(1/\rho)>0,
\end{equation}
gives
\begin{equation}
\pi(\Phi_w,\theta \mid \mu_w)
\propto
\exp\!\Big(
-\beta\,\mathrm{OT}(\mu_w,\nu_{\Phi_w,\theta})
-\lambda_0|\Phi_w|
\Big)
\pi_0(\theta).
\end{equation}

Therefore, the MAP estimator satisfies
\begin{equation}
(\Phi_w^\star,\theta^\star)
\in
\arg\max_{\Phi_w,\theta}
\pi(\Phi_w,\theta \mid \mu_w).
\end{equation}
Equivalently, it minimizes the negative log-posterior density:
\begin{equation}
(\Phi_w^\star,\theta^\star)
\in
\arg\min_{\Phi_w,\theta}
\left[
\beta\,\mathrm{OT}(\mu_w,\nu_{\Phi_w,\theta})
+\lambda_0|\Phi_w|
-\log\pi_0(\theta)
\right],
\end{equation}
up to additive constants independent of \((\Phi_w,\theta)\).

Since \(\beta>0\), dividing by \(\beta\) preserves the minimizer:
\begin{equation}
(\Phi_w^\star,\theta^\star)
\in
\arg\min_{\Phi_w,\theta}
\left[
\mathrm{OT}(\mu_w,\nu_{\Phi_w,\theta})
+\frac{\lambda_0}{\beta}|\Phi_w|
-\frac{1}{\beta}\log\pi_0(\theta)
\right].
\end{equation}
Defining
\begin{equation}
\lambda \triangleq \frac{\lambda_0}{\beta}
=
\frac{\log(1/\rho)}{\beta},
\end{equation}
we obtain
\begin{equation}
(\Phi_w^\star,\theta^\star)
\in
\arg\min_{\Phi_w,\theta}
\left[
\mathrm{OT}(\mu_w,\nu_{\Phi_w,\theta})
+\lambda|\Phi_w|
-\frac{1}{\beta}\log\pi_0(\theta)
\right].
\end{equation}

\textbf{Conclusion.}~Thus, when \(\pi_0(\theta)\) is uniform, MAP inference under the PPP-induced Gibbs posterior coincides with minimizing the cardinality-penalized OT objective
\begin{equation}
\min_{\Phi_w,\theta}
\mathrm{OT}(\mu_w,\nu_{\Phi_w,\theta})
+\lambda|\Phi_w|.
\end{equation}
which coincides with the
penalized OT problem in Eq.~\eqref{eq:penalizedW2}.

\section{Proof of Proposition~\ref{prop:global_to_slide}.} \label{app:prop_proof}
Following formulation in Sec.~\ref{sec:prelim}, we recall the OT functional under square cost:
\begin{equation}
\mathrm{OT}(\alpha,\beta)
=
\inf_{\gamma\in\Gamma(\alpha,\beta)}
\left\{
\int \|u-v\|_2^2\,d\gamma(u,v)
\right\}.
\end{equation}
It is a standard result~\cite{peyre2019computational} that $\mathrm{OT}(\alpha,\beta)$ is jointly convex in $(\alpha,\beta)$ on $\mathcal P_2(\mathbb R^d)$ (for cost $\|u-v\|_2^2$).
Therefore, by Jensen's inequality and the definitions $p_r=\mathbb E_w[\mu_w]$ and $p_\theta=\mathbb E_w[\nu_{w,\theta}]$,
\begin{equation}
\mathrm{OT}(p_r,p_\theta)
=
\mathrm{OT}\!\left(\mathbb E_w[\mu_w],\,\mathbb E_w[\nu_{w,\theta}]\right)
\le
\mathbb E_w\!\left[\mathrm{OT}(\mu_w,\nu_{w,\theta})\right],
\end{equation}
which proves Eq.~\eqref{eq:global_le_slide}.
The penalized inequality Eq.~\eqref{eq:global_penalized_le_slide} follows by applying the same argument to $\nu_{w,\Phi_w,\theta}$ and adding $\lambda\,\mathbb E_w[|\Phi_w|]$ to both sides.

\section{Lemmas and Derivations} \label{app:lemma_proof}

\begin{lemma} (adapted from ~\cite{weng2024probabilistic} )\label{lemma:linear_b}
    For any scalar function $g(\mathbf{r)}$ and a binary vector $\xi = [\xi_1, \xi_2, \ldots, \xi_n]$ such that $\xi_i \in \{0, 1\}$ and $\xi$ has exactly one non-zero component, we have
    \begin{equation}
        g\Bigg(\sum_{i=1}^n \xi_i \cdot \mathbf{r}_i \Bigg) = \sum_{i=1}^n \Bigg( \xi_i \cdot g(\mathbf{r}_i) \Bigg) \label{eq:lemma_a1}
    \end{equation}  
\end{lemma}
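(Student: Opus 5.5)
The statement is a pointwise identity, so I would prove it by exploiting the one-hot structure of $\xi$ directly rather than any property of $g$. Since $\xi\in\{0,1\}^n$ has exactly one non-zero component, there is a unique index $j\in\{1,\ldots,n\}$ with $\xi_j=1$ and $\xi_i=0$ for all $i\neq j$. The plan is to evaluate both sides of Eq.~\eqref{eq:lemma_a1} at this fixed $j$ and show that each reduces to $g(\mathbf{r}_j)$.

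For the left-hand side, I would first simplify the argument. Every term with $i\neq j$ vanishes because $\xi_i=0$, so $\sum_{i=1}^n \xi_i\,\mathbf{r}_i = 1\cdot\mathbf{r}_j = \mathbf{r}_j$. This holds in whatever vector space the $\mathbf{r}_i$ live in (for the application, $\mathbb{R}^d$ with the $\phi_k$). Hence the left-hand side equals $g(\mathbf{r}_j)$.

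For the right-hand side, the same vanishing gives $\sum_{i=1}^n \xi_i\, g(\mathbf{r}_i) = 1\cdot g(\mathbf{r}_j) = g(\mathbf{r}_j)$. Here I would note explicitly that $0\cdot g(\mathbf{r}_i)=0$ requires $g(\mathbf{r}_i)$ to be finite. This is a real point in the application, where $g$ is a log-density such as $\log\mathcal{N}(z_m\mid\cdot)$ or a log-softmax, both of which are finite-valued. Equating the two sides completes the argument. No linearity, convexity or continuity of $g$ is needed, and this is precisely why the lemma makes $R_1(b)$ and $R_2(b)$ in Eq.~\eqref{eq:final_linear_form} linear in $b_m$: one takes $g(\cdot)=\log\mathcal{N}(z_m\mid\cdot\,;\mathrm{diag}(\vartheta(\cdot;\gamma)))$ with $\mathbf{r}_k=\phi_k$.

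There is no genuine obstacle. The only care needed is to state the finiteness convention for $g$, or to interpret $0\cdot(\pm\infty)=0$, so that the right-hand side is well defined. Otherwise the argument is a one-line case reduction.
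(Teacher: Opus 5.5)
Your proposal is correct and matches the paper's proof: fix the unique index $j$ with $\xi_j=1$, and both sides of the identity reduce to $g(\mathbf{r}_j)$. Your remark that $g$ must be finite-valued, so that $0\cdot g(\mathbf{r}_i)=0$, is a small point the paper leaves implicit; it holds in the application, since Gaussian log-densities and log-softmax values are finite.
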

with respect to any set $\{ \mathbf{r}_i \}_{i=1}^n$ of valid inputs to $g(\mathbf{r})$.

\textbf{Proof.}~Suppose the only non-zero component appears at position $j$, both sides of Eq.~\ref{eq:lemma_a1} will be $g(\mathbf{r}_j)$.


\begin{lemma} \label{lemma:R1}
    Let $\mathbb{P}(z_m \mid b_m, \Phi)$ defined as in Eq.~\ref{eq:N_new}. Let $R_1(b) \triangleq \sum_{m=1}^M \log \mathbb{P}(z_m \mid b_m, \Phi)$, considering $(z_m, \Phi)$ as constants. We have
    \begin{equation}
        R_1(b) = \sum_{m=1}^M \sum_{k=1}^K b_{mk} \cdot \log \mathcal{N}\left( z_m \mid \phi_k, \text{diag}\left( \vartheta \left( \phi_k; \gamma \right) \right) \right), 
    \end{equation}  
    which is linear in terms of the assignment parameter $b$.
\end{lemma}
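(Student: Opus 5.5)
The plan is to derive Lemma~\ref{lemma:R1} directly from Lemma~\ref{lemma:linear_b}, applied separately to each pattern $z_m$.

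First, under Eq.~\eqref{eq:N_new}, the conditional density of $z_m$ given $b_m$ is $\mathcal{N}\big(z_m \mid \psi_m, \mathrm{diag}(\vartheta(\psi_m;\gamma))\big)$ with $\psi_m=\sum_{k=1}^K b_{mk}\phi_k$. With $z_m$, $\Phi$ and $\gamma$ held fixed, define the scalar function
\[
g(\mathbf{r}) \triangleq \log \mathcal{N}\big(z_m \mid \mathbf{r}, \mathrm{diag}(\vartheta(\mathbf{r};\gamma))\big)
\]
on candidate means $\mathbf{r}\in\mathbb{R}^d$. This is well defined whenever $\vartheta$ outputs strictly positive variances, which I will state as a standing assumption on the network $\vartheta$.

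Second, $b_m$ is one-hot, so Lemma~\ref{lemma:linear_b} applies with $\xi=b_m$ and $\mathbf{r}_k=\phi_k$. It gives
\[
\log \mathbb{P}(z_m\mid b_m,\Phi)=g\Big(\sum_k b_{mk}\phi_k\Big)=\sum_k b_{mk}\,g(\phi_k).
\]
The covariance also depends on $\psi_m$ through $\vartheta(\psi_m;\gamma)$. This is exactly why $g$ should be defined as a function of the whole argument $\mathbf{r}$, entering both the mean and the variance, rather than expanding the Gaussian log-density and trying to linearize its terms separately. The latter fails: the quadratic form in $\psi_m$ and $\log\det$ of $\vartheta(\psi_m)$ are not linear in $b_m$ over the reals, but they coincide with the linear expression on one-hot vectors.

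Third, sum over $m=1,\dots,M$ to obtain the stated form of $R_1(b)$. Linearity then holds because the coefficients $\log\mathcal{N}(z_m\mid\phi_k,\mathrm{diag}(\vartheta(\phi_k;\gamma)))$ depend only on $(z_m,\Phi,\gamma)$, which are fixed, so $R_1$ is a linear functional of $b\in\{0,1\}^{M\times K}$.

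The main subtlety is the second step. Linearity is claimed only on the feasible set of one-hot assignments, not as an identity for arbitrary real $b$. I would state this restriction explicitly, since it is all that the $\argmax$ in Eq.~\eqref{eq:final_linear_form} requires.
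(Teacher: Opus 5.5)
Your proposal is correct and follows essentially the same route as the paper's proof. Both define $g(\mathbf{x})=\log\mathcal{N}\big(z_m\mid\mathbf{x},\mathrm{diag}(\vartheta(\mathbf{x};\gamma))\big)$ so that the argument enters both the mean and the covariance, apply Lemma~\ref{lemma:linear_b} with the one-hot $b_m$ and $\mathbf{r}_k=\phi_k$, and sum over $m$. Your extra remarks are sensible clarifications that the paper leaves implicit: that $\vartheta$ must output positive variances, and that ``linear'' means agreement with a linear function on the one-hot feasible set.
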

\textbf{Proof.}
To derive results of Lemma~\ref{lemma:R1}, note that Eq.~\ref{eq:N_new} implies the following,
\begin{align}
    \log \mathbb{P}(z_m\mid b_m, \Phi) &= \log \mathcal{N}\left( z_m \mid \sum_{k=1}^K b_{mk} \cdot \phi_k, \text{diag}\left( \delta \left( \sum_{k=1}^K b_{mk} \cdot \phi_k; \gamma \right) \right) \right) \\
    & = g\left( \sum_{k=1}^K b_{mk}\cdot \phi_k \right) \label{eq:12}
\end{align}
where we define
\begin{equation}
    g(\mathbf{x}) \triangleq \log \mathcal{N}\bigg( z_m \mid \mathbf{x}, \text{diag}\left( \vartheta( \mathbf{x}; \gamma )\right) \bigg)
\end{equation}
In addition, since $\sum_k b_{mk} =1 $ with $b_{mk} \in \{ 0, 1 \}$, Lemma~\ref{lemma:linear_b} implies that
\begin{equation}
    g\bigg( \sum_{k=1}^K b_{mk} \cdot \phi_k \bigg) = \sum_{k=1}^K \bigg( b_{mk} \cdot g(\phi_k) \bigg) \label{eq:14}
\end{equation} 
We then plug Eq.~\ref{eq:14} into Eq.~\ref{eq:12} to have
\begin{align}
    \log \mathbb{P}(z_m \mid b_m, \Phi) &=  g\bigg( \sum_{k=1}^K b_{mk} \cdot \phi_k \bigg) = \sum_{k=1}^K \bigg( b_{mk} \cdot g(\phi_k)  \bigg) \\
    &= \sum_{k=1}^K b_{mk} \cdot \log \mathcal{N} \bigg( z_m \mid \phi_k, \text{diag}\big( \vartheta(\delta_k; \gamma) \big) \bigg)   \label{eq:16}
\end{align}
Finally, taking summation over $m=1, 2, \ldots , M$ on both sides of Eq.~\ref{eq:16}, we arrive Lemma~\ref{lemma:R1}.

\begin{lemma} \label{lemma:R2}
    Let $\mathbb{P}(b_m \mid \zeta )$ defined as in Eq.~\ref{eq:cate_b}. Let $R_2(b) \triangleq \sum_{m=1}^M \log \mathbb{P}(b_m \mid \zeta)$, considering $(\zeta)$ as constants. We have
    \begin{equation}
        R_2(b) = \sum_{m=1}^M \sum_{k=1}^K b_{mk} \cdot \log \Bigg( \frac{\exp \big( \alpha( \phi_k; \zeta ) \big)}{ \sum_{k'} \exp \big( \alpha(\phi_{k'}; \zeta) \big) } \Bigg), 
    \end{equation}  
    which is linear in terms of the assignment parameter $b$.
\end{lemma}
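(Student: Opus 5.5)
The plan mirrors the proof of Lemma~\ref{lemma:R1}, with the Gaussian log-density replaced by the log of the categorical probability. Define, for fixed $\zeta$ and fixed concept set $\Phi$, the normalized log-weights
\[
\ell_k \triangleq \log \Bigg( \frac{\exp\big(\alpha(\phi_k;\zeta)\big)}{\sum_{k'} \exp\big(\alpha(\phi_{k'};\zeta)\big)} \Bigg), \qquad k=1,\ldots,K.
\]
These are constants with respect to $b$. The denominator sums over all $k'$ and does not involve $b_m$. It is finite and positive, so each $\ell_k$ is a well-defined real number.

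First, take the logarithm of Eq.~\eqref{eq:cate_b} for a single pattern $m$. Since the prior is a finite product of powers, the logarithm turns it into a sum:
\[
\log \mathbb{P}(b_m\mid\zeta) = \sum_{k=1}^K b_{mk}\,\ell_k .
\]
This uses only $\log(x^{b}) = b\log x$ for $x>0$ and $b\in\{0,1\}$. It is valid because every softmax weight is strictly positive. As a consistency check, Lemma~\ref{lemma:linear_b} gives the same conclusion. Apply it with $\xi=b_m$, $\mathbf r_k = e_k$ (the $k$-th standard basis vector) and $g(\mathbf x) = \sum_k x_k \ell_k$. Here exactly one $b_{mk}=1$ by the one-hot constraint, so both sides equal $\ell_j$ for the active index $j$.

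Second, sum over $m=1,\ldots,M$ to obtain the stated expression for $R_2(b)$. This is permitted because the assignments are i.i.d. under the prior, as in the factorization of Eq.~\eqref{eq:assignment_prior}. The result is a finite sum of the variables $b_{mk}$ with coefficients $\ell_k$ that do not depend on $b$. It is therefore linear in $b$, which proves the lemma.

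The only point needing care, and the one I expect to be the main obstacle, is making sure the coefficients are really independent of $b$. In the active-set procedure, the normalizer $\sum_{k'}$ could be read as running over active concepts only, and the active set depends on $b$. I would state explicitly that the lemma is applied within the $b$-update step, where the concept index set $\{1,\ldots,K\}$ is held fixed and pruning happens only afterwards, as in Algorithm~\ref{alg:condense}. With this convention linearity holds exactly, and combined with Lemma~\ref{lemma:R1} it justifies the per-pattern $\arg\max$ update in Eq.~\eqref{eq:final_linear_form}.
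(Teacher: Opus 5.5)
Your proof is correct and follows the paper's argument: substitute Eq.~\eqref{eq:cate_b} into $R_2(b)$, turn the logarithm of the product of softmax powers into $\sum_k b_{mk}\log(\cdot)$, and sum over $m$. The i.i.d.\ justification for summing over $m$ is unnecessary, since $R_2$ is defined as that sum. Your point that $\Phi$ and the concept index set must stay fixed during the $b$-update is a useful clarification that the paper leaves implicit.
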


\textbf{Proof.} Plug Eq.~\ref{eq:cate_b} into the definition of $R_2(b)$, we have
\begin{align}
    R_2(b) &= \sum_{m=1}^M \log \mathbb{P}(b_m \mid \zeta) = \sum_{m=1}^M \log \Bigg( \prod_{k=1}^K \bigg( \frac{\exp\big( \alpha(\phi_k; \zeta) \big)}{\sum_k \exp\big( \alpha( \phi_k; \zeta ) \big)} \bigg)^{b_{mk}} \Bigg) \\
    &= \sum_{m=1}^M \sum_{k=1}^K b_{mk} \cdot \log \Bigg( \frac{\exp\big( \alpha(\phi_k; \zeta) \big)}{\sum_k \exp\big( \alpha( \phi_k; \zeta ) \big)} \Bigg)
\end{align}
which naturally arrives Lemma~\ref{lemma:R2}.

\section{Additional Results} \label{app:more_results}

\subsection{Performance on a Needle-In-A-Hay-Stack Dataset}

\textbf{Setup.}~To further assess robustness beyond tasks where distribution-preserving condensation is naturally effective, we evaluate NICER on Camelyon16~\cite{bejnordi2017camelyon16}, a clinically relevant metastasis detection benchmark where discriminative signals may be sparse. We compare NICER against strong baselines under the same settings and metrics as Tab.~\ref{tab:main-results-dd}, using UNI as the foundation encoder.

\begin{table}[h]
\centering
\caption{Performance of baselines on Condensation tasks on Camelyon 16. \best{The best} and \second{second-best} results are highlighted in \best{bold red}, and \second{blue}.}
\label{tab:camelyon16}
\resizebox{0.8\textwidth}{!}{%
\begin{tabular}{@{}l|c|l|l|l@{}}
\toprule
\multicolumn{1}{c|}{\textbf{Method}} 
& \textbf{Predictor} 
& \multicolumn{1}{c|}{\textbf{Kappa}} 
& \multicolumn{1}{c|}{\textbf{Bal. Acc.}} 
& \multicolumn{1}{c}{\textbf{F1}} \\ 
\midrule

OT 
& \multirow{4}{*}{ABMIL} 
& $-0.0045 \pm 0.0417$ 
& $0.5000 \pm 0.0208$ 
& $0.3193 \pm 0.0915$ \\

H2T 
& 
& $-0.0303 \pm 0.0379$ 
& $0.4837 \pm 0.0204$ 
& $0.4397 \pm 0.0122$ \\

PANTHER 
& 
& \second{$0.1206 \pm 0.1917$} 
& \second{$0.5357 \pm 0.1359$} 
& \second{$0.4759 \pm 0.2117$} \\

NICER (Ours) 
& 
& \best{$0.8471 \pm 0.0512$} 
& \best{$0.9175 \pm 0.0253$} 
& \best{$0.9262 \pm 0.0247$} \\ 
\midrule

OT 
& \multirow{4}{*}{DSMIL} 
& \second{$0.1307 \pm 0.1693$} 
& \second{$0.5735 \pm 0.0936$} 
& $0.4006 \pm 0.1937$ \\

H2T 
& 
& $0.0495 \pm 0.0533$ 
& $0.5575 \pm 0.0308$ 
& $0.4047 \pm 0.0243$ \\

PANTHER 
& 
& $0.0155 \pm 0.0837$ 
& $0.5114 \pm 0.0464$ 
& \second{$0.4352 \pm 0.0504$} \\

NICER (Ours) 
& 
& \best{$0.5453 \pm 0.2233$} 
& \best{$0.7712 \pm 0.0985$} 
& \best{$0.7710 \pm 0.1273$} \\ 
\midrule

OT 
& \multirow{4}{*}{ILRA} 
& \second{$0.1856 \pm 0.1702$} 
& \second{$0.6017 \pm 0.0927$} 
& \second{$0.5674 \pm 0.0743$} \\

H2T 
& 
& $-0.0259 \pm 0.1092$ 
& $0.4853 \pm 0.0530$ 
& $0.4528 \pm 0.0283$ \\

PANTHER 
& 
& $-0.0171 \pm 0.0296$ 
& $0.4910 \pm 0.0156$ 
& $0.3802 \pm 0.1199$ \\

NICER (Ours) 
& 
& \best{$0.7254 \pm 0.1014$} 
& \best{$0.8762 \pm 0.0530$} 
& \best{$0.8627 \pm 0.0504$} \\ 
\bottomrule
\end{tabular}%
}
\end{table}

\textbf{Discussion.}~As shown in Tab.~\ref{tab:camelyon16}, NICER consistently achieves the best performance across all metrics and predictors. In fact, NICER outperforms the baselines by up to approximately 43\% and 39\% in accuracy using ABMIL and ILRA, respectively. This demonstrates NICER's ability to preserve diagnostically important features even when the relevant patches constitute a small minority of the whole slide. In contrast, other baselines such as PANTHER lack capacity-adaptive properties and consequently fail to retain such features, resulting in more substantial performance degradation. 

\subsection{Extended Results with a Different Pretrained Encoder}

\textbf{Setup.}~We further evaluated our algorithm against the existing SOTA baselines (using the same settings and metrics as Table 1) on the PANDA dataset using CONCH~\citep{lu2024conch} as the pre-trained encoder in order to investigate their robustness across different pretrained architectures. It is a SOTA vision-language foundation model whose multimodal features provide a stronger and more diverse evaluation settings in computational pathology.

\textbf{Discussion.}~As shown in Tab.~\ref{tab:results-conch}, NICER consistently achieves the best performance across all metrics and predictors. In fact, NICER outperforms the most competitive baselines by around 23\% and 13\% in accuracy using ABMIL and DSMIL, respectively. This demonstrates NICER's ability to preserve diagnostically important features constructed by different encoders with different pretraining paradigms, highlighting the generalization of our method.

\begin{table}[]
\centering
\caption{Performance of baselines on Condensation tasks on PANDA dataset using CONCH as a pretrained encoder. \best{The best} and \second{second-best} results are highlighted in \best{bold red}, and \second{blue}.}
\label{tab:results-conch}
\resizebox{0.85\textwidth}{!}{%
\begin{tabular}{@{}l|c|l|l|l@{}}
\toprule
\multicolumn{1}{c|}{\textbf{Method}} & \textbf{Predictor}     & \multicolumn{1}{c|}{\textbf{Kappa}} & \multicolumn{1}{c|}{\textbf{Accuracy}} & \multicolumn{1}{c}{\textbf{F1}} \\ \midrule
OT                                   & \multirow{4}{*}{ABMIL} & 0.0264 ± 0.0849                     & 0.2036 ± 0.0158                        & 0.1213 ± 0.0318                 \\
H2T                                  &                        & \second{0.5318 ± 0.1789}                     & \second{0.4034 ± 0.0736}                        & \second{0.3439 ± 0.1228}                 \\
PANTHER                              &                        & 0.0467 ± 0.1002                     & 0.2373 ± 0.0807                        & 0.1341 ± 0.0659                 \\
\textbf{NICER (Ours)}                &                        & \best{0.8392 ± 0.0514}            & \best{0.6398 ± 0.0365}               & \best{0.6332 ± 0.0327}        \\ \midrule
OT                                   & \multirow{4}{*}{DSMIL} & {0.6253 ± 0.0944}                     & \second{0.4430 ± 0.0636}                        & \second{0.4436 ± 0.0843}                 \\
H2T                                  &                        & \second{0.6548 ± 0.0411 }                    & 0.4183 ± 0.0083                        & 0.3494 ± 0.0205                 \\
PANTHER                              &                        & 0.0416 ± 0.1444                     & 0.1782 ± 0.0587                        & 0.0834 ± 0.0538                 \\
\textbf{NICER (Ours)}                &                        & \best{0.8283 ± 0.0230}            & \best{0.5712 ± 0.0237}               & \best{0.5675 ± 0.0311}        \\ \bottomrule
\end{tabular}
}
\end{table}

\subsection{Prototype Analysis}

\textbf{Concept Prototype Diversity and Convergence.}
We extend the findings of Section~\ref{sec:main_results} by examining additional WSI instances from TCGA-NSCLC. Figure~\ref{fig:tsne_prototypes_track_full} illustrates the trajectories of concept prototypes over 10 optimization iterations of NICER for three representative samples: \textit{TCGA-93-A4JN-01Z-00-DX1.ED4C9365-6CCF-4AEE-B4C9-3CC5EC57286C}, \textit{TCGA-50-6594-01Z-00-DX1.43b2005a-4245-4025-ad85-4a957f308a5c}, and \textit{TCGA-49-4514-01Z-00-DX2.f1565a36-257d-432e-a84d-47c1d7a0185f}. The visualizations reveal that different WSIs exhibit distinct condensation dynamics, reflecting variation in morphological complexity and feature distribution. For example, the first sample (ending in “86C”) shows prototypes that initially cluster tightly, suggesting greater homogeneity, whereas the other two slides begin with more diffuse clusters, indicating higher heterogeneity.

\begin{figure}
    \centering
    \includegraphics[width=0.95\linewidth]{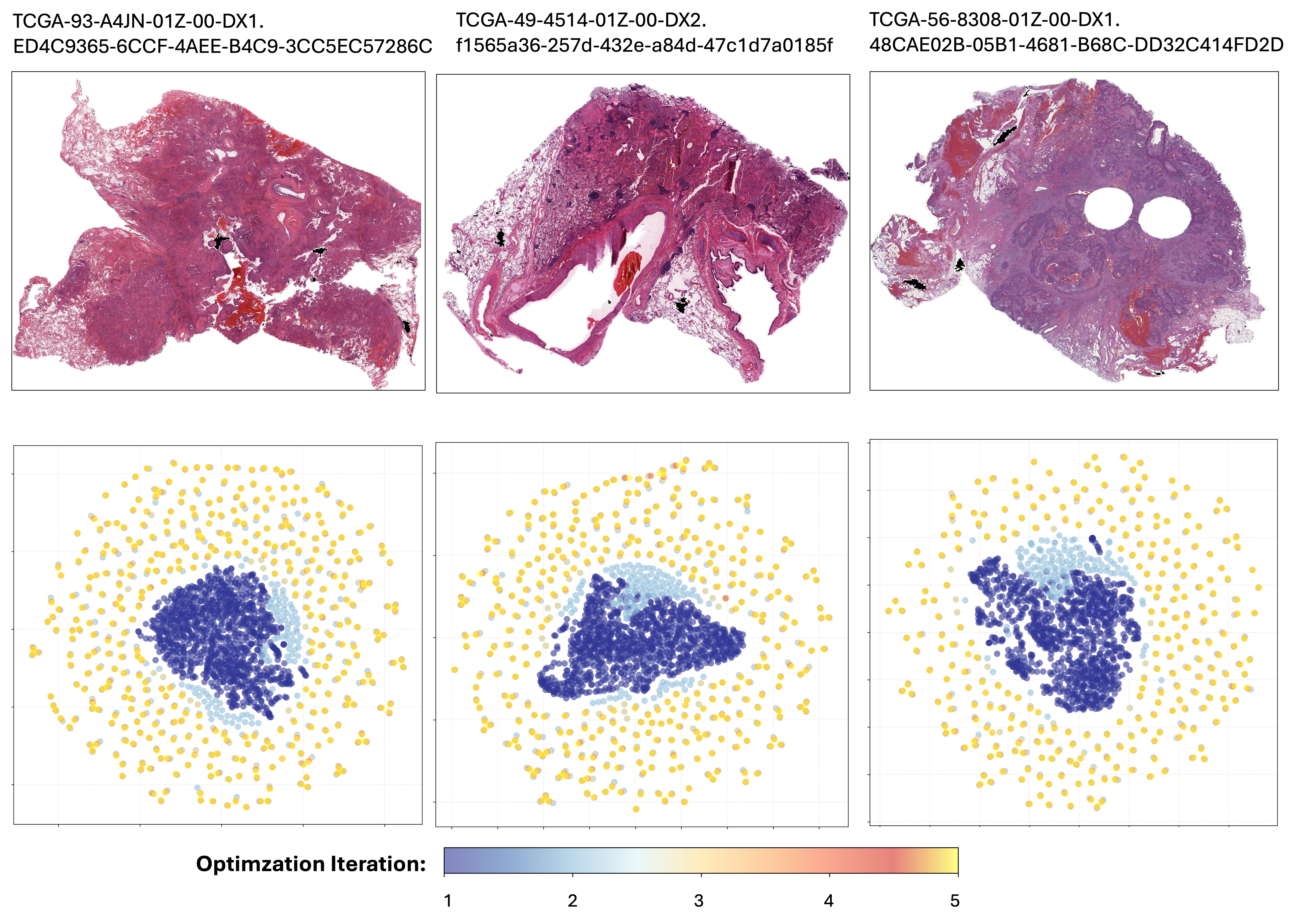}
    \caption{2D t-SNE visualizations of NICER's concept prototypes learned over five iterations using three representative TCGA slides. These plots highlight that NICER can capture different diverse information across different WSIs.}
    \label{fig:tsne_prototypes_track_full}
\end{figure}

Despite these sample-specific differences, a consistent pattern emerges across all trajectories: prototypes gradually diverge from their initialization with increasing variance, reflecting how condensation enforces specialization and reduces redundancy while preserving diversity. This behavior highlights NICER’s ability to uncover distinct and non-overlapping concept structures within each slide. Complementary evidence is shown in Figure~\ref{fig:number_of_proto_full}, where the number of prototypes stabilizes after only a few iterations. This rapid convergence indicates that redundant concepts are pruned early, leaving a compact and stable set that continues to refine qualitatively rather than quantitatively. Together, these results emphasize NICER’s efficiency in learning diverse, non-redundant representations of WSIs with minimal optimization steps.

\begin{figure}
    \centering
    \includegraphics[width=0.9\linewidth]{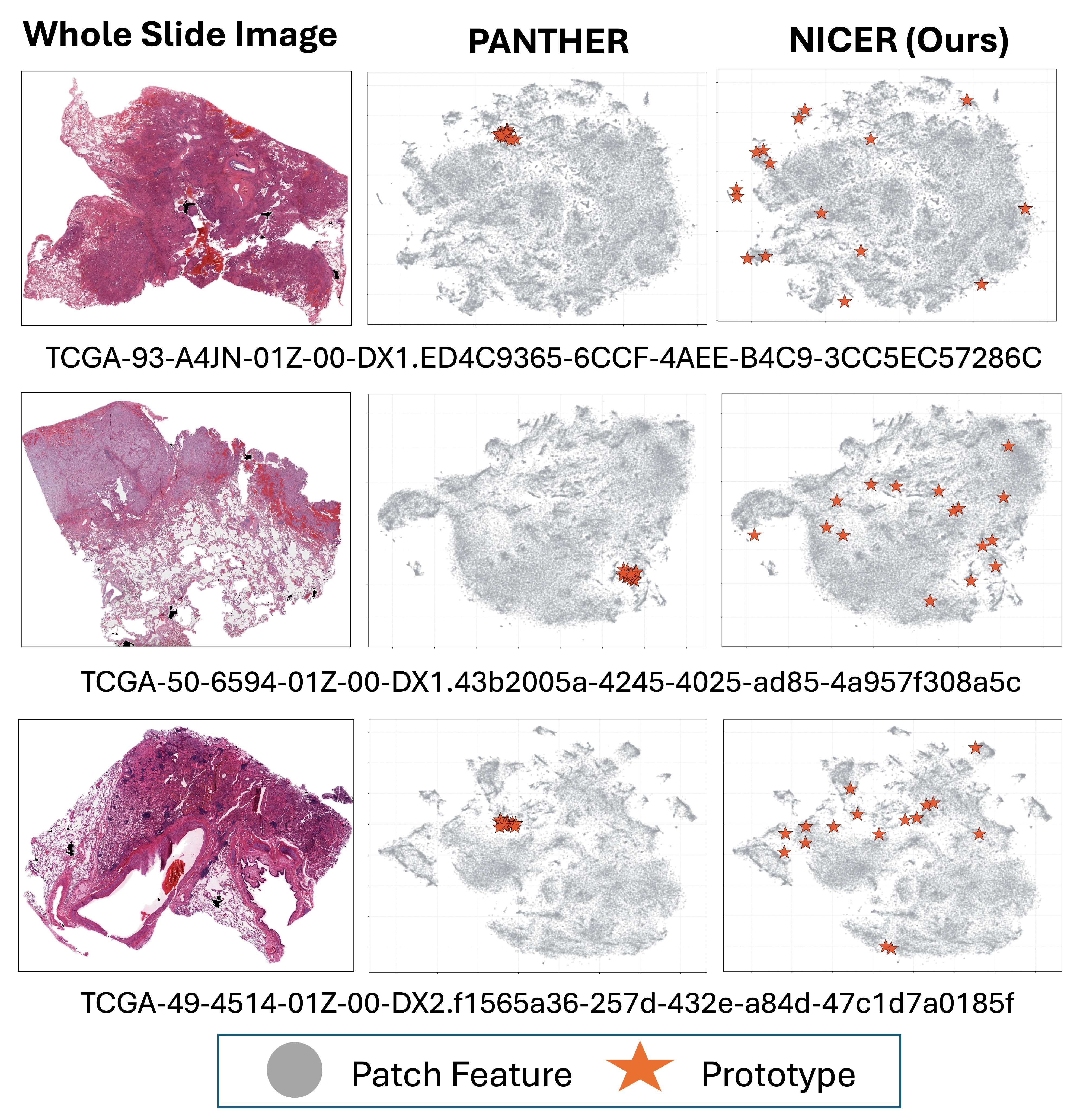}
    \caption{2D t-SNE visualization of original WSI feature bags and learned concept prototypes, comparing our method (NICER) with a representative prior approach (PANTHER).}
    \label{fig:prototype_clusters}
\end{figure}

\textbf{Diversity Preservation.}
To evaluate NICER’s ability to preserve diversity across whole-slide images (WSIs), we compare 2D t-SNE visualizations of concept prototypes generated by NICER and PANTHER, alongside the original feature bags, on three TCGA-NSCLC samples: \textit{TCGA-93-A4JN-01Z-00-DX1.ED4C9365-6CCF-4AEE-B4C9-3CC5EC57286C}, \textit{TCGA-50-6594-01Z-00-DX1.43b2005a-4245-4025-ad85-4a957f308a5c}, and \textit{TCGA-49-4514-01Z-00-DX2.f1565a36-257d-432e-a84d-47c1d7a0185f}. For fair visualization, NICER’s prototypes are clustered with K-Means to 16, matching the fixed prototype count used in PANTHER~\citep{song2024cvpr}. As shown in Figure~\ref{fig:prototype_clusters}, PANTHER fails to capture the inherent diversity of WSIs, collapsing heterogeneous regions into a limited number of clusters and discarding critical information required for downstream tasks, an effect consistent with its suboptimal performances in Table~\ref{tab:main-results-dd} and Table~\ref{tab:morphology_prototype}. This limitation arises because PANTHER enforces a rigid and heuristically small prototype budget, prioritizing efficiency at the expense of representational fidelity. In contrast, NICER employs an alternating optimization strategy that leverages pattern-based diversity preservation and condensation-driven efficiency, ensuring prototypes remain well-separated and encode distinct, slide-specific information. This balance enables NICER to adapt to the variability of individual WSIs while maintaining compact yet expressive concept sets.

To validate the consistency of our observations in Fig.~\ref{fig:prototype_clusters}, we further project the original feature bag and the concept prototypes learned by NICER and PANTHER for a WSI from the TCGA-BRCA cohort (slide ID: \textit{TCGA-49-4514-01Z-00-DX2.f1565a36-257d-432e-a84d-47c1d7a0185f}) using three 2D projection methods: t-SNE, UMAP, and PCA. As shown in Fig.~\ref{fig:different_2D}, NICER preserves a diverse set of pathological patterns after the condensation process. In contrast, PANTHER yields highly similar prototype representations with limited discriminative structure, indicating substantial information loss. This observation aligns with ours from Fig.~\ref{fig:prototype_clusters} and downstream results shown in Sec.~\ref{sec:experiments}.

{
\textbf{Concept Prototype Capacity Analysis}. To evaluate NICER's ability to adapt its capacity to complexity of data, we report the average number of concept prototypes across different datasets. As can be seen from Table\ref{tab:num_of_proto}, for larger and more complex WSIs (e.g., NSCLC, LUAD, BRCA), NICER naturally allocates more capacity to preserve the underlying conceptual diversity in the slides. This aligns with our observation in Section~\ref{sec:experiments}.
}

\begin{table}[t]
\caption{{Average number of final concept prototypes across different datasets}}
\centering
\label{tab:num_of_proto}
\begin{tabular}{@{}llll@{}}
\toprule
PANDA & NSCLC  & LUAD (Survival) & BRCA (Survival) \\ \midrule
62.80 & 154.67 & 157.40          & 158.60          \\ \bottomrule
\end{tabular}
\end{table}

\begin{figure}
   \begin{minipage}[]{0.45\textwidth}
        \centering
        \captionof{table}{Impact of number of initial patterns $M$ on overall performance across predictors}
        \label{tab:abl_initialP_full}
        \resizebox{0.90\linewidth}{!}{%
        
            \begin{tabular}{@{}c|c|ccc@{}}
            \toprule
            \textbf{Predictor} & \textbf{\begin{tabular}[c]{@{}c@{}}$M$ \end{tabular}} & \textbf{Kappa} & \textbf{Accuracy} & \textbf{F1} \\ \midrule
            \multirow{4}{*}{\textbf{ABMIL}} & 50   & 85.48 & 67.56 & 67.64 \\
                                            & 200  & 90.36 & 73.15 & 73.04 \\
                                            & 500  & 91.85 & 74.50 & 74.46 \\
                                            & 1000 & 91.48 & 74.94 & 75.00 \\ \midrule
            \multirow{4}{*}{\textbf{DSMIL}} & 50   & 82.75 & 60.18 & 61.03 \\
                                            & 200  & 86.74 & 66.89 & 64.56 \\
                                            & 500  & 88.59 & 67.34 & 68.03 \\
                                            & 1000 & 88.73 & 68.46 & 67.52 \\ \midrule
            \multirow{4}{*}{\textbf{ILRA}}  & 50   & 84.41 & 65.55 & 65.13 \\
                                            & 200  & 89.67 & 66.62 & 67.34 \\
                                            & 500  & 92.25 & 73.83 & 74.55 \\
                                            & 1000 & 92.25 & 73.83 & 74.55 \\ \bottomrule
            \end{tabular}%
        }
    \end{minipage}
    \hfill
    \begin{minipage}[]{0.50\textwidth}
        \centering
            \includegraphics[width=\linewidth]{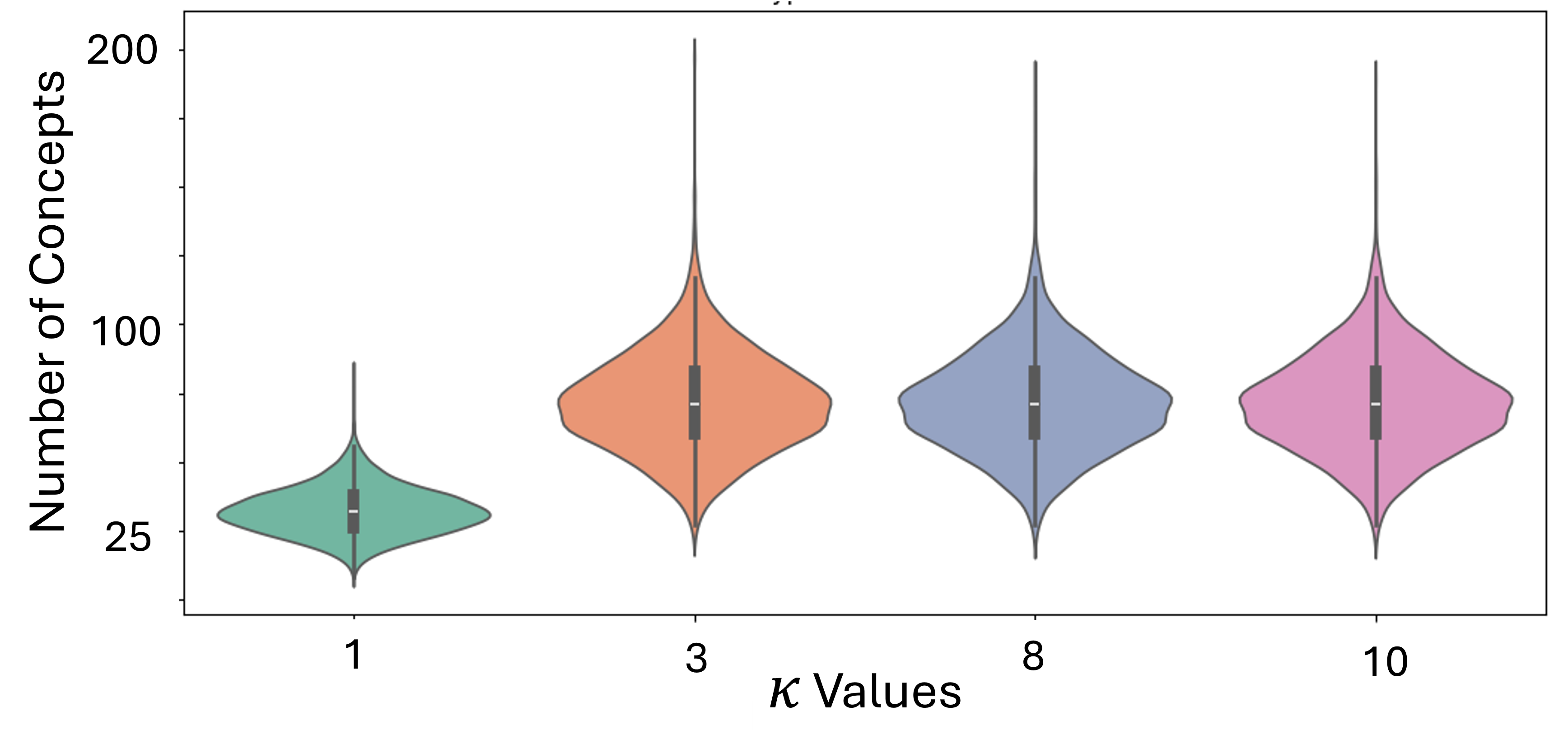}
            \captionof{figure}{Impact of top-$\kappa$ in number of final concept prototypes, conducted on PANDA. The plots show that number of concepts increases when we increase $\kappa$, allowing more preservation capacity. After a certain point, the condensation process saturates this count to a stable value range.}
    \label{fig:nproto_by_k}
    \end{minipage}
\end{figure}

\subsection{Sensitivity Analysis}

\begin{figure}
    \centering
    \includegraphics[width=0.9\linewidth]{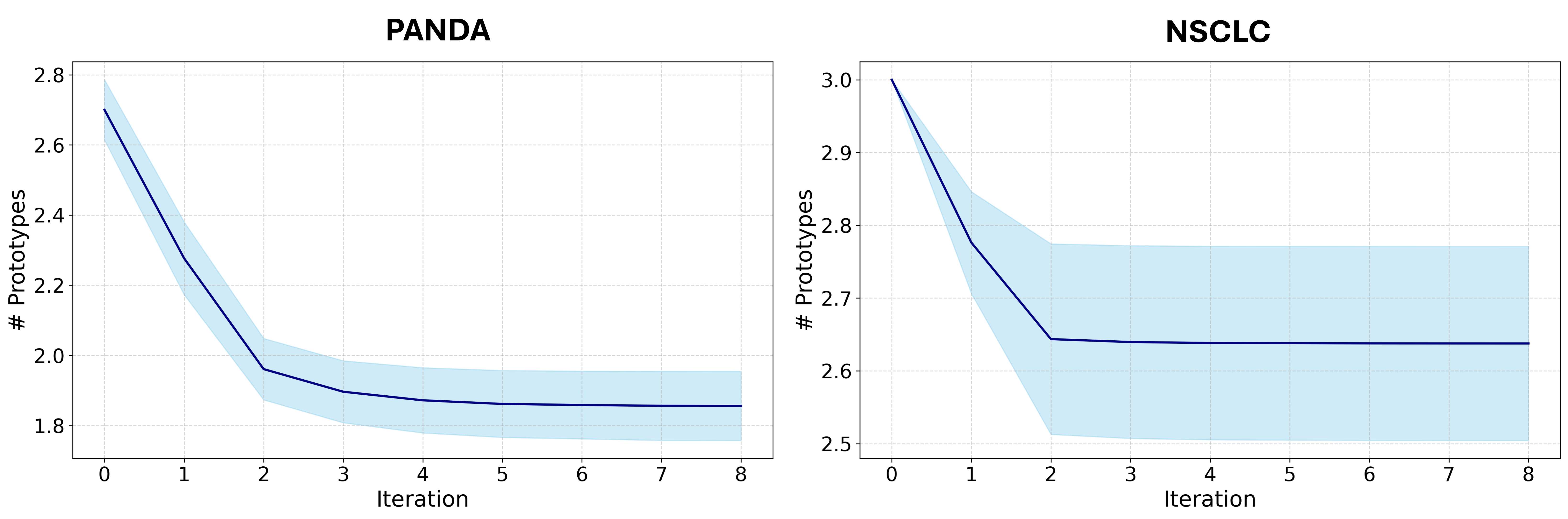}
    \caption{Number of prototypes tracked over the first 8 iterations on PANDA and NSCLC dataset. The values are reported at log10 scale}
    \label{fig:number_of_proto_full}
\end{figure}

\begin{table}[]{}
    \centering
    \captionof{table}{Performance-Efficiency trade-offs comparison of NICER and the SOTA method, on PANDA using ABMIL under the same seed.}
    
    \resizebox{0.5\linewidth}{!}{%
       \begin{tabular}{@{}c|lcl@{}}
        \toprule
        \textbf{Method}                                                   & \multicolumn{1}{c}{\textbf{Config}} & \textbf{\% Input} & \multicolumn{1}{c}{\textbf{F1}} \\ \midrule
        \begin{tabular}[c]{@{}c@{}}Whole Bag \\ (Upperbound)\end{tabular} & \multicolumn{1}{c}{-}               & 100\%                      & 74.42                           \\ \midrule
        \multirow{3}{*}{PANTHER}     & K=16   & 2.58\%  & 41.11 \\
                                     & K=64   & 10.39\% & 55.93 \\
                                     & K=128  & 20.78\% & 69.45 \\ \midrule
        \multirow{4}{*}{NICER (Ours)} & M=50   & 5.79\%  & 67.64 \\
                                     & M=200  & 11.32\% & 73.04 \\
                                     & M=500  & 11.56\% & 74.46 \\
                                     & M=1000 & 11.31\% & 75.00 \\ \bottomrule
        \end{tabular}%
    } \label{tab:performance_efficiency}
\end{table}

\textbf{Extensive Results on Impacts of Number of Initial Prototypes ($M$).}
Table~\ref{tab:abl_initialP_full} complements the analysis in Figure~\ref{fig:initial_P} by evaluating how the initial number of patterns influences NICER’s performance across different predictor architectures. The results demonstrate that increasing the pattern set size consistently improves performance, as a larger pool enhances the model’s capacity to distill informative representations. However, these gains diminish once the pattern set grows beyond a certain threshold, with performance gradually reaching a plateau. This saturation reflects the role of the condensation process, which systematically eliminates redundancy and stabilizes the number of effective representations required to characterize a WSI, regardless of the starting pattern count, aligning our insights discussed from Figure~\ref{fig:initial_P}.

\textbf{Extensive Results on Impacts of Top-$\kappa$ during Pattern Exploration.}
We further conduct a sensitivity analysis on the number of pattern assigned to each patch feature, controlled by the top-$\kappa$ selection in the Condensation Ability tasks. The results, reported in Tab.~\ref{tab:abl_top_k_full}, reveal a clear trend: model performance improves steadily across all evaluation metrics as $\kappa$ increases, indicating that incorporating multiple patterns per patch allows richer information to be preserved. However, beyond a moderate value (e.g., $\kappa=3$), this benefit begins to diminish, with performance gains plateauing or even slightly decreasing. The degradation at larger $\kappa$ arises because information from each patch becomes overly dispersed across many patterns, weakening the sharpness of the distilled representation. These observations corroborate our earlier findings in Fig.~\ref{fig:initial_P} and Tab.~\ref{tab:abl_top_k_perform}, and reinforce the design principle outlined in Sec.~\ref{sec:distillation} that pattern assignments must balance informativeness with compactness.

\textbf{Performance-Efficiency Trade-offs Comparison.}
To assess the effectiveness of NICER in the condensation problem, we compare its performance-efficiency trade-off against PANTHER on the cancer subtyping task using the PANDA dataset across varying condensation levels. The condensation level is controlled by the number of prototypes $K$ in PANTHER and the number of initial patterns $M$ in NICER, where $M$ serves as an upper bound on the number of final concepts. As reported in Tab.~\ref{tab:performance_efficiency}, both methods exhibit an upward trend in F1 as the \% Input increases. Notably, at comparable condensation levels (e.g., PANTHER with 10.39\% Input vs.\ NICER with 11.32\% Input), NICER surpasses PANTHER by a substantial margin of nearly 19\%, underscoring its superior ability to preserve relevant information during condensation.

\begin{table}[t]
\centering
\caption{Ablation study of top-$\kappa$ on downstream tasks performance across different predictor architectures. The experiments are conducted on PANDA dataset.}
\label{tab:abl_top_k_full}
\resizebox{0.85\linewidth}{!}{%
\begin{tabular}{@{}c|ccc|ccc|ccc@{}}
\toprule
\textbf{Predictor} & \multicolumn{3}{c|}{\textbf{ABMIL}} & \multicolumn{3}{c|}{\textbf{DSMIL}} & \multicolumn{3}{c}{\textbf{ILRA}} \\ \midrule
Top-$\kappa$ & Kappa & Acc.  & F1    & Kappa & Acc.  & F1    & Kappa & Acc.  & F1    \\ \midrule
1     & 90.18 & 70.02 & 69.87 & 85.42 & 64.65 & 65.95 & 89.17 & 64.88 & 64.24 \\
3     & 91.48 & 74.94 & 75.00 & 88.73 & 68.46 & 67.52 & 92.25 & 73.83 & 74.55 \\
5     & 90.47 & 72.93 & 72.72 & 88.80 & 68.46 & 66.66 & 88.66 & 71.36 & 71.47 \\
10    & 89.52 & 70.47 & 70.49 & 85.42 & 66.89 & 66.94 & 88.88 & 68.90 & 68.92 \\ \bottomrule
\end{tabular}%
}
\end{table}

\textbf{Extensive Results on Effectiveness of Condensation Stage.}
Fig.~\ref{fig:nproto_by_k} shows the distribution of the number of final concepts produced by NICER across different $\kappa$ values. As expected, the average number of prototypes increases with larger $\kappa$, since greater capacity enables the model to capture more diverse patterns from complex WSIs. Beyond a certain point, however, this growth plateaus, indicating that the condensation process has effectively merged overlapping patterns and compressed them into a compact, stable concept set. This stabilization highlights NICER’s ability to balance capacity with redundancy removal, yielding a consistent number of meaningful prototypes.




\subsection{Performance under Controlled Compute Settings}

We further investigate whether the performance gains of NICER can be explained by differences in compression budget. Specifically, we compare NICER against fixed-prototype baselines under matched prototype budgets. For each dataset, we rerun the selected baselines using the average and maximum numbers of prototypes discovered by NICER. These settings provide the baselines with an equal or larger prototype budget than NICER, enabling a controlled evaluation of whether NICER’s gains persist after accounting for prototype count. Because NICER determines the number of concepts adaptively, these matched budgets are not available a priori and are used only for this controlled analysis.

\paragraph{Mean-prototype setting.} We compute the mean number of prototypes discovered by NICER across the dataset and use it as a fixed budget for the baselines, thereby matching their global dataset-level budget to NICER’s. The resuls are presented in Tab.~\ref{tab:app:mean_prototype}.

\begin{table}[]
\caption{Performance comparison of NICER and selected baselines in Mean-Prototype settings}
\label{tab:app:mean_prototype}
\centering
\begin{tabular}{@{}l|c|ccc|c@{}}
\toprule
\multicolumn{1}{c|}{\multirow{2}{*}{\textbf{Method}}} & \multirow{2}{*}{\textbf{Head}} & \multicolumn{3}{l|}{\textbf{NSCLC - Match}}   & \multicolumn{1}{l}{\textbf{LUAD-Match}} \\ \cmidrule(l){3-6} 
\multicolumn{1}{c|}{}                                 &                                & Kappa         & Bal. Acc.     & F1            & C-Index                                 \\ \midrule
PANTHER                                               & \multirow{3}{*}{ABMIL}         & 9.67 ± 3.23   & 54.84 ± 1.59  & 52.51 ± 4.27  & 52.79 ± 10.26                           \\
OT                                                    &                                & 82.04 ± 2.96  & 91.03 ± 1.54  & 91.02 ± 1.47  & 62.60 ± 2.22                            \\
NICER                                                 &                                & 89.73 ± 1.81  & 94.84 ± 0.92  & 94.87 ± 0.91  & 64.67 ± 2.11                            \\ \midrule
PANTHER                                               & \multirow{3}{*}{DSMIL}         & 1.21 ± 2.01   & 48.70 ± 2.66  & 42.41 ± 8.36  & 40.03 ± 3.11                            \\
OT                                                    &                                & 78.74 ± 12.13 & 89.24 ± 6.15  & 89.20 ± 6.33  & 56.54 ± 1.61                            \\
NICER                                                 &                                & 84.61 ± 1.56  & 92.30 ± 0.74  & 91.34 ± 0.79  & 62.12 ± 1.41                            \\ \midrule
PANTHER                                               & \multirow{3}{*}{ILRA}          & 4.51 ± 7.81   & 52.58 ± 3.66  & 43.95 ± 11.50 & 53.78 ± 6.59                            \\
OT                                                    &                                & 63.30 ± 20.57 & 81.78 ± 10.08 & 54.11 ± 46.25 & 61.45 ± 6.77                            \\
NICER                                                 &                                & 88.47 ± 1.56  & 94.28 ± 0.76  & 94.23 ± 0.79  & 68.06 ± 4.69                            \\ \bottomrule
\end{tabular}
\end{table}

\paragraph{Maximum-prototype setting.} To stress test this further, we conduct additional ablation studies where we set each baseline’s budget to the maximum number of prototypes discovered by NICER, eliminating NICER’s slide-level budget advantage (the baseline’s prototype budget is now larger than NICER’s). The results are reported in Tab.~\ref{tab:app:max_prototype}.

\begin{table}[]
\caption{Performance comparison of NICER and selected baselines in Max-Prototype settings}
\label{tab:app:max_prototype}
\centering
\begin{tabular}{@{}l|c|lll@{}}
\toprule
\multicolumn{1}{c|}{\multirow{2}{*}{\textbf{Method}}} & \multirow{2}{*}{\textbf{Head}} & \multicolumn{3}{l}{\textbf{NSCLC - Max}}     \\ \cmidrule(l){3-5} 
\multicolumn{1}{c|}{}                                 &                                & Kappa         & Bal. Acc.    & F1            \\ \midrule
PANTHER                                               & \multirow{3}{*}{ABMIL}         & 2.35 ± 5.56   & 51.19 ± 2.80 & 41.52 ± 7.33  \\
OT                                                    &                                & 73.84 ± 15.46 & 87.01 ± 7.60 & 86.51 ± 8.41  \\
NICER                                                 &                                & 89.73 ± 1.81  & 94.84 ± 0.92 & 94.87 ± 0.91  \\ \midrule
PANTHER                                               & \multirow{3}{*}{DSMIL}         & 6.97 ± 0.96   & 53.46 ± 0.47 & 44.10 ± 3.38  \\
OT                                                    &                                & 82.06 ± 5.54  & 91.06 ± 2.75 & 91.02 ± 2.78  \\
NICER                                                 &                                & 84.61 ± 1.56  & 92.30 ± 0.74 & 91.34 ± 0.79  \\ \midrule
PANTHER                                               & \multirow{3}{*}{ILRA}          & 4.41 ± 6.03   & 52.21 ± 3.00 & 43.00 ± 10.91 \\
OT                                                    &                                & 72.56 ± 19.20 & 86.32 ± 9.44 & 85.78 ± 10.46 \\
NICER                                                 &                                & 88.47 ± 1.56  & 94.28 ± 0.76 & 94.23 ± 0.79  \\ \bottomrule
\end{tabular}
\end{table}

\paragraph{Observation.} In both settings, NICER consistently outperforms both PANTHER and OT. This confirms that the performance gain indeed comes from NICER design rather than from leveraging more prototype budget.
Notably, the results also show that the performance of PANTHER and OT do not increase with more prototype budget (see Tab.~\ref{tab:app:mean_prototype}). This observation is consistent with prior work~\cite{song2024cvpr}, which reports that the performance of fixed-budget condensation methods does not necessarily improve with larger storage/prototype budgets and may even degrade. This is because a globally fixed budget cannot account for variations in slide complexity: it may allocate excessive capacity to simple slides while providing insufficient capacity for complex ones. This mismatch can introduce redundant or biased features in the former case and information loss in the latter, limiting transferability across tasks and prediction heads. These results highlight the need for a condensation framework with slide-adaptive capacity, such as NICER.

\section{Pathological Interpretation Details} \label{app:biology}

We investigate whether the condensed representation set learned To identify interpretable patterns, we apply K-means clustering to the condensed features with $K=16$, following \cite{song2024cvpr}, and expect the condensed representations to correctly map the original patches to their corresponding clusters. As described in Sec.~\ref{sec:method}, the correspondence between original patches and global concepts remains traceable after condensation. Consequently, the K-means cluster assignments on the condensed features can be propagated back to group the original patches. Intuitively, if the condensed concept set retains essential pathological information, clustering in this condensed space should yield accurate pathology pattern assignments, without the need to directly cluster the raw patch features. 

Specifically, we conduct experiments at two different scales.
\textbf{(1) Local pattern annotation}, where we annotate representative pathological patterns associated with each cluster to assess the intra-pattern interpretability of the learned concepts; and
\textbf{(2) Global pattern visualization}, where we examine inter-pattern discrepancies by visualizing cluster assignments overlaid on the original WSIs.
In the global analysis, we compare our method with PANTHER, a state-of-the-art approach, to demonstrate NICER’s superior ability to preserve pathological information. All experiments are conducted and evaluated by a board-certified pathologist.

\begin{figure}
    \centering
    \includegraphics[width=\linewidth]{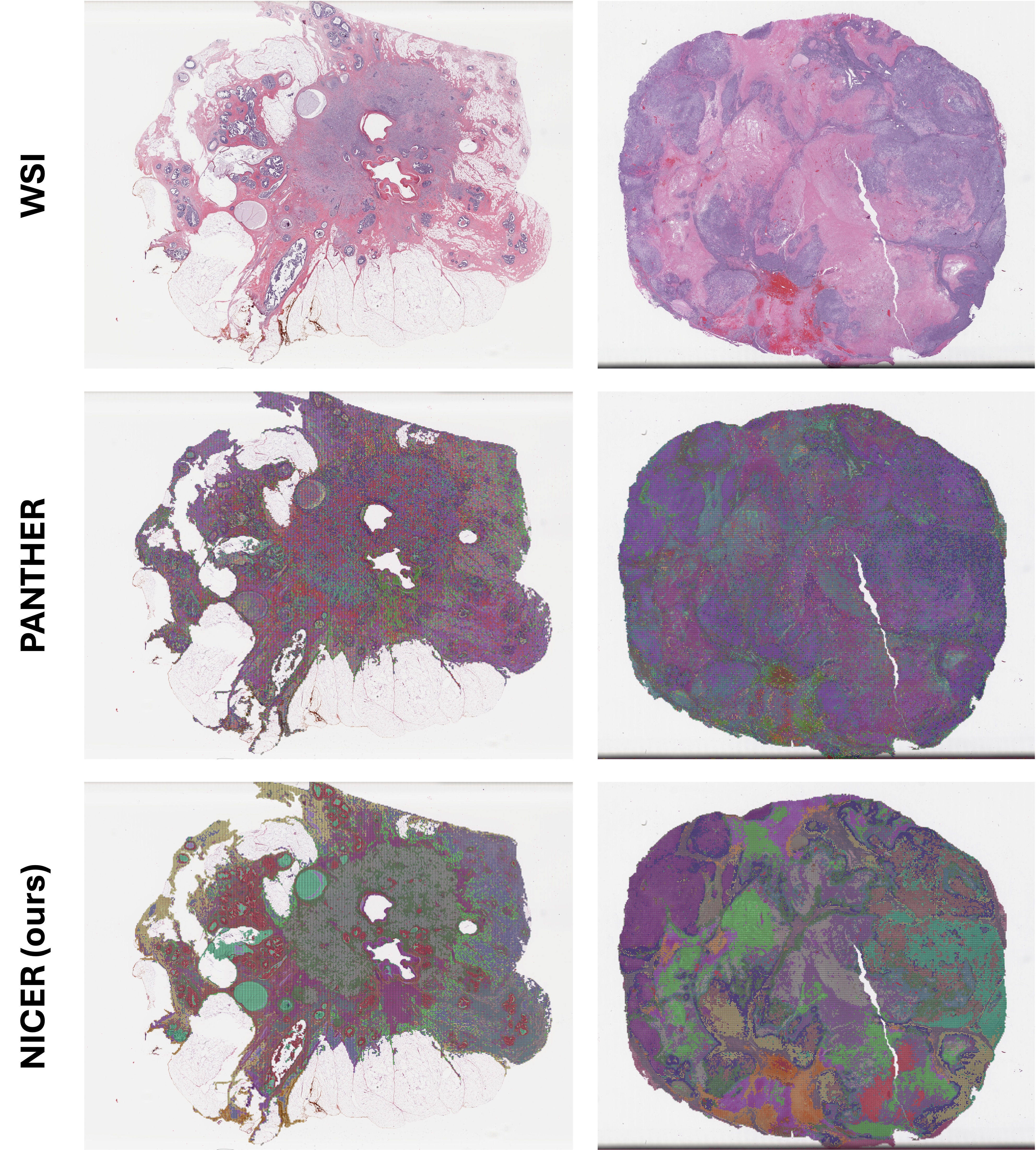}
    \caption{Visualization of prototype assignments on whole-slide images (WSIs) produced by PANTHER and NICER (ours).
    From left to right, the WSIs correspond to \textit{TCGA-E2-A570-01Z-00-DX1.64E5F2CA-9396-40DC-A300-04A54657BDE6} and \textit{TCGA-LL-A8F5-01Z-00-DX1.7579425E-1425-4160-A9DD-3D50F4C5428D} from the TCGA-BRCA cohort. NICER yields more coherent prototype assignments and more accurately delineates tumor regions in both cases, highlighting its strong unsupervised learning capability. \textbf{Best to view in color}}
    \label{fig:segment_visualization}
\end{figure}

\begin{figure}
    \centering
    \includegraphics[width=\linewidth]{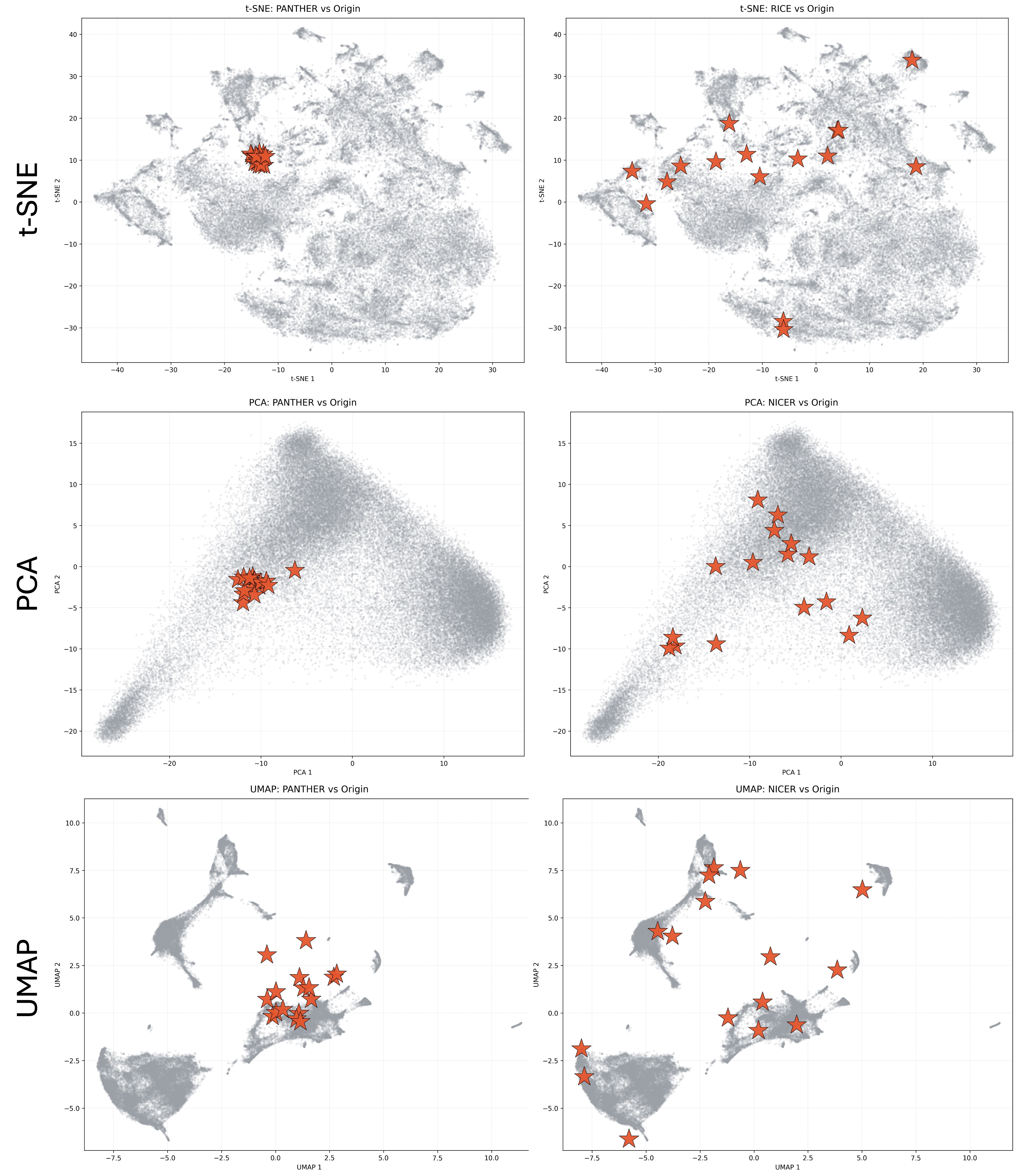}
    \caption{Different 2D projections of original WSI feature bags and learned concept prototypes, showing that our observation on diversity preservation is consistent across multiple projection space. The experiment is conducted on a WSI in TCGA-BRCA with the slide id \textit{TCGA-49-4514-01Z-00-DX2.f1565a36-257d-432e-a84d-47c1d7a0185f}}
    \label{fig:different_2D}
\end{figure}

\subsection{Local Pattern Annotation}
As shown in Fig.~\ref{fig:biology}, the local pattern annotations demonstrate that NICER-derived clusters correspond to well-defined and clinically recognizable histopathological entities, including tumor cell populations with pleomorphic nuclei, mitotic figures, apoptotic cells, necrotic areas, inflammatory infiltrates, vascular structures, and background components such as red blood cells and debris. Importantly, each cluster exhibits high intra-cluster morphological consistency, indicating that the condensed representations preserve fine-grained cellular and tissue-level features. Moreover, NICER is able to separate closely related but diagnostically distinct patterns (e.g., viable tumor versus tumor-adjacent necrosis, dispersed tumor cells versus dense tumor nests), which are critical for pathological assessment. These results suggest that the learned concepts are not only compact but also biologically meaningful, supporting the potential of NICER for clinically interpretable morphological representation learning.

\subsection{Global Pattern Visualization}
Fig.~\ref{fig:segment_visualization} shows the visualization of all NICER's patterns clustered by K-Means and PANTHER~\cite{song2024cvpr}. We observe that compared to PANTHER, NICER produces pattern maps that align more closely with the underlying histological structures observed in the original WSIs. In particular, regions corresponding to tumor tissue show stronger spatial agreement with the WSI morphology, while necrotic and non-tumor areas are more coherently grouped. In contrast, PANTHER exhibits more fragmented and spatially noisy assignments, making direct pathological interpretation more challenging. These observations suggest that NICER better preserves biologically meaningful spatial organization during global pattern mapping.

\section{Limitations} \label{app:limitations}

While NICER effectively handles heterogeneous WSI complexity under limited computational resources, several limitations remain. 

First, NICER adopts a bag-of-features assumption, treating patches independently without explicitly modeling spatial or multi-scale relationships. While this aligns with assumptions of standard MIL settings and is suitable for our tasks, fine-grained tissue architecture and spatial priors are not directly captured, which may be important for clinically sensitive tasks. Incorporating spatial structure into WSI condensation is an important direction for future work.

Second, the PPP prior assumes independent concept occurrence. While this is reasonable when concepts represent distinguishable prototypical patterns, it may limit the ability to capture highly complex or correlated morphological structures in extreme cases. Extending NICER with more flexible priors could further improve its expressiveness.

Finally, NICER is formulated under a fixed pretrained representation lens. Although we examine the effect of encoder choice using two foundation models, a more systematic evaluation across diverse pretrained encoders would further clarify its robustness, generality, and applicability in real-world medical settings.

\newpage
\section*{NeurIPS Paper Checklist}

\begin{enumerate}

\item {\bf Claims}
    \item[] Question: Do the main claims made in the abstract and introduction accurately reflect the paper's contributions and scope?
    \item[] Answer: \answerYes{} 
    \item[] Justification: We present practical motivation in Sec.~\ref{sec:intro} for our algorithm design.
    \item[] Guidelines:
    \begin{itemize}
        \item The answer \answerNA{} means that the abstract and introduction do not include the claims made in the paper.
        \item The abstract and/or introduction should clearly state the claims made, including the contributions made in the paper and important assumptions and limitations. A \answerNo{} or \answerNA{} answer to this question will not be perceived well by the reviewers. 
        \item The claims made should match theoretical and experimental results, and reflect how much the results can be expected to generalize to other settings. 
        \item It is fine to include aspirational goals as motivation as long as it is clear that these goals are not attained by the paper. 
    \end{itemize}

\item {\bf Limitations}
    \item[] Question: Does the paper discuss the limitations of the work performed by the authors?
    \item[] Answer: \answerYes{} 
    \item[] Justification: Please refer to App.~\ref{app:limitations}
    \item[] Guidelines:
    \begin{itemize}
        \item The answer \answerNA{} means that the paper has no limitation while the answer \answerNo{} means that the paper has limitations, but those are not discussed in the paper. 
        \item The authors are encouraged to create a separate ``Limitations'' section in their paper.
        \item The paper should point out any strong assumptions and how robust the results are to violations of these assumptions (e.g., independence assumptions, noiseless settings, model well-specification, asymptotic approximations only holding locally). The authors should reflect on how these assumptions might be violated in practice and what the implications would be.
        \item The authors should reflect on the scope of the claims made, e.g., if the approach was only tested on a few datasets or with a few runs. In general, empirical results often depend on implicit assumptions, which should be articulated.
        \item The authors should reflect on the factors that influence the performance of the approach. For example, a facial recognition algorithm may perform poorly when image resolution is low or images are taken in low lighting. Or a speech-to-text system might not be used reliably to provide closed captions for online lectures because it fails to handle technical jargon.
        \item The authors should discuss the computational efficiency of the proposed algorithms and how they scale with dataset size.
        \item If applicable, the authors should discuss possible limitations of their approach to address problems of privacy and fairness.
        \item While the authors might fear that complete honesty about limitations might be used by reviewers as grounds for rejection, a worse outcome might be that reviewers discover limitations that aren't acknowledged in the paper. The authors should use their best judgment and recognize that individual actions in favor of transparency play an important role in developing norms that preserve the integrity of the community. Reviewers will be specifically instructed to not penalize honesty concerning limitations.
    \end{itemize}

\item {\bf Theory assumptions and proofs}
    \item[] Question: For each theoretical result, does the paper provide the full set of assumptions and a complete (and correct) proof?
    \item[] Answer: \answerYes{} 
    \item[] Justification: Please refer to Apps.~\ref{app:thm_proof}, \ref{app:prop_proof} and \ref{app:derive_HZ} for details.
    \item[] Guidelines:
    \begin{itemize}
        \item The answer \answerNA{} means that the paper does not include theoretical results. 
        \item All the theorems, formulas, and proofs in the paper should be numbered and cross-referenced.
        \item All assumptions should be clearly stated or referenced in the statement of any theorems.
        \item The proofs can either appear in the main paper or the supplemental material, but if they appear in the supplemental material, the authors are encouraged to provide a short proof sketch to provide intuition. 
        \item Inversely, any informal proof provided in the core of the paper should be complemented by formal proofs provided in appendix or supplemental material.
        \item Theorems and Lemmas that the proof relies upon should be properly referenced. 
    \end{itemize}

    \item {\bf Experimental result reproducibility}
    \item[] Question: Does the paper fully disclose all the information needed to reproduce the main experimental results of the paper to the extent that it affects the main claims and/or conclusions of the paper (regardless of whether the code and data are provided or not)?
    \item[] Answer: \answerYes{} 
    \item[] Justification: Please refer to Sec.~\ref{sec:exp_settings} and Apps.~\ref{app:implementation_details}, \ref{app:pseudocode} for details.
    \item[] Guidelines:
    \begin{itemize}
        \item The answer \answerNA{} means that the paper does not include experiments.
        \item If the paper includes experiments, a \answerNo{} answer to this question will not be perceived well by the reviewers: Making the paper reproducible is important, regardless of whether the code and data are provided or not.
        \item If the contribution is a dataset and\slash or model, the authors should describe the steps taken to make their results reproducible or verifiable. 
        \item Depending on the contribution, reproducibility can be accomplished in various ways. For example, if the contribution is a novel architecture, describing the architecture fully might suffice, or if the contribution is a specific model and empirical evaluation, it may be necessary to either make it possible for others to replicate the model with the same dataset, or provide access to the model. In general. releasing code and data is often one good way to accomplish this, but reproducibility can also be provided via detailed instructions for how to replicate the results, access to a hosted model (e.g., in the case of a large language model), releasing of a model checkpoint, or other means that are appropriate to the research performed.
        \item While NeurIPS does not require releasing code, the conference does require all submissions to provide some reasonable avenue for reproducibility, which may depend on the nature of the contribution. For example
        \begin{enumerate}
            \item If the contribution is primarily a new algorithm, the paper should make it clear how to reproduce that algorithm.
            \item If the contribution is primarily a new model architecture, the paper should describe the architecture clearly and fully.
            \item If the contribution is a new model (e.g., a large language model), then there should either be a way to access this model for reproducing the results or a way to reproduce the model (e.g., with an open-source dataset or instructions for how to construct the dataset).
            \item We recognize that reproducibility may be tricky in some cases, in which case authors are welcome to describe the particular way they provide for reproducibility. In the case of closed-source models, it may be that access to the model is limited in some way (e.g., to registered users), but it should be possible for other researchers to have some path to reproducing or verifying the results.
        \end{enumerate}
    \end{itemize}

\item {\bf Open access to data and code}
    \item[] Question: Does the paper provide open access to the data and code, with sufficient instructions to faithfully reproduce the main experimental results, as described in supplemental material?
    \item[] Answer: \answerYes{} 
    \item[] Justification: We use public data which can be downloaded online. The source code will be released upon acceptance.
    \item[] Guidelines:
    \begin{itemize}
        \item The answer \answerNA{} means that paper does not include experiments requiring code.
        \item Please see the NeurIPS code and data submission guidelines (\url{https://neurips.cc/public/guides/CodeSubmissionPolicy}) for more details.
        \item While we encourage the release of code and data, we understand that this might not be possible, so \answerNo{} is an acceptable answer. Papers cannot be rejected simply for not including code, unless this is central to the contribution (e.g., for a new open-source benchmark).
        \item The instructions should contain the exact command and environment needed to run to reproduce the results. See the NeurIPS code and data submission guidelines (\url{https://neurips.cc/public/guides/CodeSubmissionPolicy}) for more details.
        \item The authors should provide instructions on data access and preparation, including how to access the raw data, preprocessed data, intermediate data, and generated data, etc.
        \item The authors should provide scripts to reproduce all experimental results for the new proposed method and baselines. If only a subset of experiments are reproducible, they should state which ones are omitted from the script and why.
        \item At submission time, to preserve anonymity, the authors should release anonymized versions (if applicable).
        \item Providing as much information as possible in supplemental material (appended to the paper) is recommended, but including URLs to data and code is permitted.
    \end{itemize}

\item {\bf Experimental setting/details}
    \item[] Question: Does the paper specify all the training and test details (e.g., data splits, hyperparameters, how they were chosen, type of optimizer) necessary to understand the results?
    \item[] Answer: \answerYes{} 
    \item[] Justification: Please refer to Sec.~\ref{sec:exp_settings} and App.~\ref{app:implementation_details}.
    \item[] Guidelines:
    \begin{itemize}
        \item The answer \answerNA{} means that the paper does not include experiments.
        \item The experimental setting should be presented in the core of the paper to a level of detail that is necessary to appreciate the results and make sense of them.
        \item The full details can be provided either with the code, in appendix, or as supplemental material.
    \end{itemize}

\item {\bf Experiment statistical significance}
    \item[] Question: Does the paper report error bars suitably and correctly defined or other appropriate information about the statistical significance of the experiments?
    \item[] Answer: \answerYes{} 
    \item[] Justification: We report error bars for the main results and additional experiments
    \item[] Guidelines:
    \begin{itemize}
        \item The answer \answerNA{} means that the paper does not include experiments.
        \item The authors should answer \answerYes{} if the results are accompanied by error bars, confidence intervals, or statistical significance tests, at least for the experiments that support the main claims of the paper.
        \item The factors of variability that the error bars are capturing should be clearly stated (for example, train/test split, initialization, random drawing of some parameter, or overall run with given experimental conditions).
        \item The method for calculating the error bars should be explained (closed form formula, call to a library function, bootstrap, etc.)
        \item The assumptions made should be given (e.g., Normally distributed errors).
        \item It should be clear whether the error bar is the standard deviation or the standard error of the mean.
        \item It is OK to report 1-sigma error bars, but one should state it. The authors should preferably report a 2-sigma error bar than state that they have a 96\% CI, if the hypothesis of Normality of errors is not verified.
        \item For asymmetric distributions, the authors should be careful not to show in tables or figures symmetric error bars that would yield results that are out of range (e.g., negative error rates).
        \item If error bars are reported in tables or plots, the authors should explain in the text how they were calculated and reference the corresponding figures or tables in the text.
    \end{itemize}

\item {\bf Experiments compute resources}
    \item[] Question: For each experiment, does the paper provide sufficient information on the computer resources (type of compute workers, memory, time of execution) needed to reproduce the experiments?
    \item[] Answer: \answerYes{} 
    \item[] Justification: We provide computation resource in Apps.~\ref{app:implementation_details}.
    \item[] Guidelines:
    \begin{itemize}
        \item The answer \answerNA{} means that the paper does not include experiments.
        \item The paper should indicate the type of compute workers CPU or GPU, internal cluster, or cloud provider, including relevant memory and storage.
        \item The paper should provide the amount of compute required for each of the individual experimental runs as well as estimate the total compute. 
        \item The paper should disclose whether the full research project required more compute than the experiments reported in the paper (e.g., preliminary or failed experiments that didn't make it into the paper). 
    \end{itemize}
    
\item {\bf Code of ethics}
    \item[] Question: Does the research conducted in the paper conform, in every respect, with the NeurIPS Code of Ethics \url{https://neurips.cc/public/EthicsGuidelines}?
    \item[] Answer: \answerYes{} 
    \item[] Justification: We have review the Code of Ethics and follow it strictly.
    \item[] Guidelines:
    \begin{itemize}
        \item The answer \answerNA{} means that the authors have not reviewed the NeurIPS Code of Ethics.
        \item If the authors answer \answerNo, they should explain the special circumstances that require a deviation from the Code of Ethics.
        \item The authors should make sure to preserve anonymity (e.g., if there is a special consideration due to laws or regulations in their jurisdiction).
    \end{itemize}

\item {\bf Broader impacts}
    \item[] Question: Does the paper discuss both potential positive societal impacts and negative societal impacts of the work performed?
    \item[] Answer: \answerYes{} 
    \item[] Justification: Please refer to App.~\ref{app:impact}
    \item[] Guidelines:
    \begin{itemize}
        \item The answer \answerNA{} means that there is no societal impact of the work performed.
        \item If the authors answer \answerNA{} or \answerNo, they should explain why their work has no societal impact or why the paper does not address societal impact.
        \item Examples of negative societal impacts include potential malicious or unintended uses (e.g., disinformation, generating fake profiles, surveillance), fairness considerations (e.g., deployment of technologies that could make decisions that unfairly impact specific groups), privacy considerations, and security considerations.
        \item The conference expects that many papers will be foundational research and not tied to particular applications, let alone deployments. However, if there is a direct path to any negative applications, the authors should point it out. For example, it is legitimate to point out that an improvement in the quality of generative models could be used to generate Deepfakes for disinformation. On the other hand, it is not needed to point out that a generic algorithm for optimizing neural networks could enable people to train models that generate Deepfakes faster.
        \item The authors should consider possible harms that could arise when the technology is being used as intended and functioning correctly, harms that could arise when the technology is being used as intended but gives incorrect results, and harms following from (intentional or unintentional) misuse of the technology.
        \item If there are negative societal impacts, the authors could also discuss possible mitigation strategies (e.g., gated release of models, providing defenses in addition to attacks, mechanisms for monitoring misuse, mechanisms to monitor how a system learns from feedback over time, improving the efficiency and accessibility of ML).
    \end{itemize}
    
\item {\bf Safeguards}
    \item[] Question: Does the paper describe safeguards that have been put in place for responsible release of data or models that have a high risk for misuse (e.g., pre-trained language models, image generators, or scraped datasets)?
    \item[] Answer: \answerNA{} 
    \item[] Justification: The paper poses no such risks.
    \item[] Guidelines:
    \begin{itemize}
        \item The answer \answerNA{} means that the paper poses no such risks.
        \item Released models that have a high risk for misuse or dual-use should be released with necessary safeguards to allow for controlled use of the model, for example by requiring that users adhere to usage guidelines or restrictions to access the model or implementing safety filters. 
        \item Datasets that have been scraped from the Internet could pose safety risks. The authors should describe how they avoided releasing unsafe images.
        \item We recognize that providing effective safeguards is challenging, and many papers do not require this, but we encourage authors to take this into account and make a best faith effort.
    \end{itemize}

\item {\bf Licenses for existing assets}
    \item[] Question: Are the creators or original owners of assets (e.g., code, data, models), used in the paper, properly credited and are the license and terms of use explicitly mentioned and properly respected?
    \item[] Answer: \answerYes{} 
    \item[] Justification: The datasets and pretrained models used in this work are cited properly.
    \item[] Guidelines:
    \begin{itemize}
        \item The answer \answerNA{} means that the paper does not use existing assets.
        \item The authors should cite the original paper that produced the code package or dataset.
        \item The authors should state which version of the asset is used and, if possible, include a URL.
        \item The name of the license (e.g., CC-BY 4.0) should be included for each asset.
        \item For scraped data from a particular source (e.g., website), the copyright and terms of service of that source should be provided.
        \item If assets are released, the license, copyright information, and terms of use in the package should be provided. For popular datasets, \url{paperswithcode.com/datasets} has curated licenses for some datasets. Their licensing guide can help determine the license of a dataset.
        \item For existing datasets that are re-packaged, both the original license and the license of the derived asset (if it has changed) should be provided.
        \item If this information is not available online, the authors are encouraged to reach out to the asset's creators.
    \end{itemize}

\item {\bf New assets}
    \item[] Question: Are new assets introduced in the paper well documented and is the documentation provided alongside the assets?
    \item[] Answer: \answerNA{} 
    \item[] Justification: The paper does not release new assets.
    \item[] Guidelines:
    \begin{itemize}
        \item The answer \answerNA{} means that the paper does not release new assets.
        \item Researchers should communicate the details of the dataset\slash code\slash model as part of their submissions via structured templates. This includes details about training, license, limitations, etc. 
        \item The paper should discuss whether and how consent was obtained from people whose asset is used.
        \item At submission time, remember to anonymize your assets (if applicable). You can either create an anonymized URL or include an anonymized zip file.
    \end{itemize}

\item {\bf Crowdsourcing and research with human subjects}
    \item[] Question: For crowdsourcing experiments and research with human subjects, does the paper include the full text of instructions given to participants and screenshots, if applicable, as well as details about compensation (if any)? 
    \item[] Answer: \answerNA{} 
    \item[] Justification: The paper does not involve crowdsourcing nor research with human subjects.
    \item[] Guidelines:
    \begin{itemize}
        \item The answer \answerNA{} means that the paper does not involve crowdsourcing nor research with human subjects.
        \item Including this information in the supplemental material is fine, but if the main contribution of the paper involves human subjects, then as much detail as possible should be included in the main paper. 
        \item According to the NeurIPS Code of Ethics, workers involved in data collection, curation, or other labor should be paid at least the minimum wage in the country of the data collector. 
    \end{itemize}

\item {\bf Institutional review board (IRB) approvals or equivalent for research with human subjects}
    \item[] Question: Does the paper describe potential risks incurred by study participants, whether such risks were disclosed to the subjects, and whether Institutional Review Board (IRB) approvals (or an equivalent approval/review based on the requirements of your country or institution) were obtained?
    \item[] Answer: \answerNA{} 
    \item[] Justification: The paper does not involve crowsourcing nor research with human subjects.
    \item[] Guidelines:
    \begin{itemize}
        \item The answer \answerNA{} means that the paper does not involve crowdsourcing nor research with human subjects.
        \item Depending on the country in which research is conducted, IRB approval (or equivalent) may be required for any human subjects research. If you obtained IRB approval, you should clearly state this in the paper. 
        \item We recognize that the procedures for this may vary significantly between institutions and locations, and we expect authors to adhere to the NeurIPS Code of Ethics and the guidelines for their institution. 
        \item For initial submissions, do not include any information that would break anonymity (if applicable), such as the institution conducting the review.
    \end{itemize}

\item {\bf Declaration of LLM usage}
    \item[] Question: Does the paper describe the usage of LLMs if it is an important, original, or non-standard component of the core methods in this research? Note that if the LLM is used only for writing, editing, or formatting purposes and does \emph{not} impact the core methodology, scientific rigor, or originality of the research, declaration is not required.
    \item[] Answer: \answerNA{} 
    \item[] Justification: LLMs, if used, were only used for language editing and formatting and did not affect the method, experiments, analysis, or scientific claims.
    \item[] Guidelines:
    \begin{itemize}
        \item The answer \answerNA{} means that the core method development in this research does not involve LLMs as any important, original, or non-standard components.
        \item Please refer to our LLM policy in the NeurIPS handbook for what should or should not be described.
    \end{itemize}

\end{enumerate}

\end{document}